\newtheorem{theorem}{Theorem}[section]
\newtheorem{lemma}[theorem]{Lemma}
\numberwithin{equation}{section}
\newtheorem{definition}[theorem]{Definition}
\theoremstyle{remark}
\newtheorem*{remarks}{Remarks}
\newtheorem*{remark}{Remark}
\newcommand{\bbC}{{\mathbb{C}}}
\newcommand{\bbD}{{\mathbb{D}}}
\newcommand{\bbR}{{\mathbb{R}}}
\newcommand{\bbZ}{{\mathbb{Z}}}
\newcommand{\uc}{[\texttt{OPUC}]}
\newcommand{\rl}{[\texttt{OPRL}]}
\newcommand{\fre}{{\frak{e}}}
\newcommand{\frf}{{\frak{f}}}
\newcommand{\calC}{{\mathcal{C}}}
\newcommand{\calJ}{{\mathcal J}}
\newcommand{\calM}{{\mathcal M}}
\newcommand{\calS}{{\mathcal S}}
\newcommand{\calT}{{\mathcal T}}
\newcommand{\til}{\tilde  }
\newcommand{\Arg}{\text{\rm{Arg}}}
\newcommand{\ran}{\text{\rm{Ran}\,}}
\newcommand{\sgn}{\text{\rm{sgn}}}
\newcommand{\beq}{\begin{equation}}
\newcommand{\eeq}{\end{equation}}
\newcommand{\ba}{\begin{align*}}
\newcommand{\ea}{\end{align*}}
\newcommand{\veps}{\varepsilon}
\newcommand{\Deg}{\operatorname{Deg}}
\DeclareMathOperator{\real}{Re}
\DeclareMathOperator{\imag}{Im}
\DeclareMathOperator*{\esssup}{ess\,supp}
\DeclareMathOperator*{\res}{Res}
\DeclareMathOperator{\Int}{Int}
\DeclareMathOperator{\sg}{sg}
\begin{document}

\title{Finite Range Perturbations of \\ Finite Gap Jacobi and CMV Operators}
\author{Rostyslav Kozhan}
\thanks{The project was supported by the grant KAW 2010.0063 from the Knut and Alice Wallenberg Foundation}
\email{kozhan@math.uu.se}
\address{Uppsala University; Department of Mathematics; Box 480, 75106 Uppsala, Sweden}
\keywords{Jacobi operators, CMV operators, resonances, spectral theorem, orthogonal polynomials}


\begin{abstract}
Necessary and sufficient conditions are presented for a measure to be the spectral measure of a finite range perturbation of a Jacobi or CMV operator from a finite gap isospectral torus.
The special case of eventually periodic operators solves an open problem of Simon~\cite[D.2.7]{OPUC2}.

We also solve the inverse resonance problem: it is shown that an operator is completely determined by the set of its eigenvalues and resonances, and we provide necessary and sufficient conditions on their configuration for such an operator to exist.
\end{abstract}

\maketitle






\section{Introduction}\label{sIntro}
By a Jacobi operator/matrix we will call a bounded Hermitian operator on $\ell_2(\bbZ_+)$ of the form
\begin{equation}\label{jacobi}
\calJ=\left(
\begin{array}{cccc}
b_1&a_1&{0}&\\
a_1 &b_2&a_2&\ddots\\
{0}&a_2 &b_3&\ddots\\
&\ddots&\ddots&\ddots\end{array}\right).
\end{equation}
Any operator of the form \eqref{jacobi} will be denoted by $\calJ[a_n,b_n]_{n=1}^\infty$. Sequences $\{a_n\}$, $\{b_n\}$ are called the Jacobi parameters of $\calJ$. We always assume these are bounded sequences, and $a_n>0$, $b_n\in\bbR$ for all $n$.

Associated to $\calJ$, we have $\mu$, the spectral measure of $\calJ$ with respect to the vector $e_1:=(1,0,0,\ldots)^T$ (which is cyclic since all $a_j>0$):
\begin{equation}\label{spectralDefinition}
\int_\bbR f(x)d\mu(x) = \langle e_1, f(\calJ) e_1 \rangle.
\end{equation}
Conversely, given any probability measure $\mu$ with compact and not finite support in $\bbR$, we can form the sequence of orthonormal polynomials which satisfy the three-term recurrence relation with the coefficients $\{a_n,b_n\}_{n=1}^\infty$ from~\eqref{jacobi}.

In this paper we will consider only measures with  essential support equal to a finite gap set
\begin{equation}\label{e}
 \fre = \bigcup_{j=1}^{l+1} [\alpha_j,\beta_j], \quad \alpha_1<\beta_1 <\alpha_2<\ldots<\alpha_{l+1}<\beta_{l+1}.
\end{equation}
We will refer to each $[\alpha_j,\beta_j]$ as a ``band'', and to each $[\beta_j,\alpha_{j+1}]$ as a ``gap''. $l$ here is the number of gaps.

Associated to $\fre$ is a natural class of operators called the isospectral torus $\calT_\fre$ of Jacobi operators (defined in Definition~\ref{defTorus} below). This includes as special cases the free Jacobi operator (discrete Schr\"{o}dinger operator) when $l=0$, $\fre=[-2,2]$ and, more generally, all periodic Jacobi operators when harmonic measures of each $[\alpha_j,\beta_j]$ in $\fre$ are rational. If not all of these harmonic measures are rational, then $\calT_\fre$ consists of almost-periodic Jacobi operators (see more details in Subsection~\ref{ssPeriodic}).

Operators in the isospectral torus are well-studied by now, and we propose to go one step further and consider their finite range perturbations: take $\calJ\in\calT_\fre$ and change finitely many of its Jacobi coefficients. 

\bigskip

Similar construction is also considered for measures on the unit circle. By a CMV operator/matrix we will call a unitary operator on $\ell_2(\bbZ_+)$ of the form
$$
\calC=
\left(
\begin{array}{cccccc}
\bar{\alpha}_0 & \rho_0   & 0   &  0 & 0 & \vphantom{\ddots}\\
\rho_0        & -\alpha_0 &  0  & 0 &  0 & \vphantom{\ddots}\\
0            &  0        &  \bar{\alpha}_2 & \rho_2  & 0 & \vphantom{\ddots} \\
0            &  0        &  \rho_2        & -\alpha_2  & 0 & \vphantom{\ddots} \\
0            &  0        &   0           &   0         & \bar{\alpha}_4 & \vphantom{\ddots} \\
\hphantom{\ddots} &     \hphantom{\ddots}        & \hphantom{\ddots}          &    \hphantom{\ddots}          &     \hphantom{\ddots}      & \ddots
\end{array}
\right)
\left(
\begin{array}{cccccc}
1  & 0 & 0 & 0 & 0 & \vphantom{\ddots} \\
0 & \bar{\alpha}_1 & \rho_1   & 0   &  0 &  \vphantom{\ddots} \\
0 & \rho_1        & -\alpha_1 &  0  & 0 &    \vphantom{\ddots}\\
0 & 0            &  0        &  \bar{\alpha}_3 & \rho_3  &  \vphantom{\ddots}\\
0 & 0            &  0        &  \rho_3        & -\alpha_3  &  \vphantom{\ddots} \\
\hphantom{\ddots} &     \hphantom{\ddots}        & \hphantom{\ddots}          &    \hphantom{\ddots}          &     \hphantom{\ddots}      & \ddots
\end{array}
\right),
$$
where $\rho_n := \sqrt{1-|\alpha_n|^2}$.   An operator of this form will be denoted by $\calC[\alpha_n]_{n=0}^\infty$. Coefficients $\alpha_n$ are assumed to satisfy $|\alpha_n|<1$ and are called the Verblunsky coefficients. The spectral measure of $\calC$ with respect to $e_1$ now lives on the unit circle $\partial\bbD:=\{z\in\bbC:|z|=1\}$:
\begin{equation}\label{spectralDefinition2}
\int_{0}^{2\pi} f(e^{i\theta})d\mu(\theta) = \langle e_1, f(\calC) e_1 \rangle.
\end{equation}
Conversely, given any probability measure $\mu$ on the unit circle not supported on finitely many points, we can form a sequence of orthogonal polynomials that satisfy Szeg\H{o}'s recurrence relations which allow to recover the Verblunsky coefficients.

A finite gap set on the unit circle is defined by
\begin{equation}\label{e2}
 \frf = \{e^{i\theta}: \theta\in \cup_{j=1}^{l} [\theta_{2j-1},\theta_{2j}] \}, \quad \theta_1<\theta_2 <\theta_3<\ldots<\theta_{2l-1}<\theta_{2l}<\theta_{1}+2\pi.
\end{equation}
We will refer to each $[\theta_{2j-1},\theta_{2j}]$ (as well as to its image under $\theta\mapsto e^{i\theta}$) as a ``band'', and to the intervals between them as ``gaps''. The number of gaps is $l$.

The associated isospectral torus $\calT_\frf$ of CMV operators is defined in Definition~\ref{defTorus} below. This includes the free CMV operator (bilateral shift on $\ell_2(\bbZ_+)$) when $\frf=\partial\bbD$ (one should think of it as $l=0$ in~\eqref{e2}), 
as well as all periodic CMV operators and certain almost-periodic operators (see Subsection~\ref{ssPeriodic}).


We will study here finite range perturbations of these operators: take an operator from $\calC\in\calT_\frf$ and change finitely many of its Verblunsky coefficients. 

\medskip

The main result of the current paper is the if-and-only-if criterion for the spectral measures, Theorem~\ref{thmS}. It is remarkable that this classification was not previously known
even for the simplest case of finite range perturbations of the free Jacobi operator (another, less direct, proof is delegated to the author's separate manuscript~\cite{K_spectral}), despite the fact that it is by far the most well studied Jacobi operator.

Moreover, we provide the if-and-only-if description of the finite range perturbations from four points of views: from the point of view of operators (finite range perturbations), from the point of view of spectral measures, from the point of view of eigenvalues and resonances (``Dirichlet data''), and from the point of view of meromorphic functions on Riemann surfaces ($m$-functions and Carath\'{e}odory functions).

In particular the classification of Carath\'{e}odory functions solves an open question from Simon~\cite[D.2.7, p.981]{OPUC2}, and the classification of eigenvalues and resonances solves existence and uniqueness of the inverse resonance problem.


\medskip

The organization of the paper is as follows. We review some previously known results in Subsection~\ref{ssHistory}. We continue with a rather lengthy introduction that includes all the definitions and preliminaries in Section~\ref{sPrelim}. In Section~\ref{sMandF} we classify the $m$-functions and Carath\'{e}odory functions of our operators. In Section~\ref{sSpectral} we deduce the spectral theorem. In Section~\ref{sResonance} we show existence and uniqueness of the resonance problem. In the final Section~\ref{sInterpolation} we provide an explicit description of $m$-functions in terms of its poles as a solution to an interpolation type problem.

\medskip

The theories of orthogonal polynomials on the real line (OPRL) and on the unit circle (OPUC) are closely related. 
We will be discussing the results for Jacobi and CMV operators in parallel, labeling each of the results  with $\rl$ and $\uc$, respectively. One of the joys of writing this paper was in appreciating the similarities between these two theories, while at the same time dealing with the subtle differences between them. We hope the reader finds this enjoyable too.

\smallskip

\textbf{Acknowledgements.} The work was finished during the author's stay at the Royal Institute of Technology (KTH). The author would like to thank the Department of Mathematics, and especially Kurt Johansson, for the hospitality. It is also a pleasure to thank Rowan Killip (UCLA) for his insightful comments. 

\subsection{History}\label{ssHistory}

Finite gap Jacobi and CMV operators appear in connection with the polynomials orthogonal with respect to a measure supported on a system of curves in $\bbC$. We refer the reader to the papers by Widom~\cite{Wid69}, Aptekarev~\cite{Apt84}, Sodin--Yuditskii~\cite{SodYud97}, Peherstorfer--Yuditskii~\cite{PehYud03}, Christiansen--Simon--Zinchenko~\cite{CSZ1,CSZ2,CSZ3,CSZr} and references therein. 

Spectral measures for short-range perturbations of the free Jacobi operator were studied by numerous authors, among which we would like to distinguish the results of Geronimo--Case~\cite{GC}, Geronimo~\cite{Geronimo}, and Damanik--Simon~\cite{DS2}. Spectral properties of finite range perturbations of periodic Jacobi operators were the subject of Geronimo--Van Assche~\cite{GvA86} and Iantchenko--Korotyaev~\cite{IK3}.

The explicit if-and-only-if characterization of the spectral measures for finite range perturbations of the free and periodic Jacobi operators (with all gaps open) is shown in the author's manuscript~\cite{K_spectral}.
In that paper we are able to classify spectral measures not only of finite range perturbations, but also of super-exponential and exponential ones. The current paper contains a much simpler and straightforward proof for the finite range case, that does not require the author's lengthy route~\cite{K_jost,K_merom,K_spectral}  through the matrix-valued spectral problem via the Damanik--Killip--Simon~\cite{DKS} ``Magic'' formula. But the real strength of the current approach is that it lends itself to the perturbations of operators from the isospectral torus not only of periodic operators but for any finite gap set. Moreover, the unitary analogue can be proven in the same way with only slight variations, in particular solving the open problem~\cite[D.2.7, p.981]{OPUC2}.

The direct resonance problem for finite range perturbations of periodic Jacobi operators was completely solved in Iantchenko--Korotyaev~\cite[Thm 1.2]{IK3}\footnote{\cite[Thm 1.2]{IK3} has a mistake: part (2) should not be there}. Their inverse resonance problem assumed additional information. Uniqueness for the inverse resonance problem for super-exponential perturbations of the free Jacobi operator was solved by Brown--Naboko--Weikard~\cite{BNW_Jac}. Existence and uniqueness for the inverse resonance problem for the super-exponential perturbations of the free and periodic Jacobi operators is solved by the author in~\cite{K_spectral}. We would also like to mention the results by Marletta--Weikard~\cite{MW}, Marletta--Naboko--Shterenberg--Weikard~\cite{MNSW}, and the author~\cite{K_spectral}, that study the stability of this inverse resonance problem.

Let us review the results for the OPUC case now. The spectral measures for finite range perturbations of the free CMV operator were fully understood for quite awhile now: these have the name of the Bernstein--Szeg\H{o} measures, and in the current context they seem to have first appeared in the papers by Verblunsky~\cite{Ver36} and then later Geronimus~\cite{Geronimus44,bGeronimus}. Finite range perturbations of periodic (and ``periodic up to a phase'', see Subsection~\ref{ssPeriodic} below) CMV operators were studied by Peherstorfer--Steinbauer~\cite{PehSte86b}.

The uniqueness for the inverse resonance problem for the super-exponential perturbations of the free CMV operator was established by Weikard--Zinchenko~\cite{WZ}. Stability for this problem was obtained by Shterenberg--Weikard--Zinchenko~\cite{SWZ}.

\smallskip

For a textbook presentation and a more extensive history overview for the theory of orthogonal polynomials on the real line (including the spectral theory of periodic and finite gap Jacobi operators), we refer the reader to the recent Simon's monograph~\cite{Rice}.  For the theory of orthogonal polynomials on the unit circle, 
we refer to~\cite{OPUC1,OPUC2}. We follow closely the terminology there.

\section{Preliminaries}\label{sPrelim}

Let us assume for the rest of the paper that  $l\ne 0$ for the \uc{} case  (unless specified otherwise). The case $l=0$ (that is, $\frf=\partial\bbD$) can be easily accommodated, but since it is easy and solved (Bernstein--Szeg\H{o}), let us ignore it, so that we can assume that the Riemann surface $\calS_\frf$, see Def.~\ref{defS}, is connected.

\subsection{Two-sheeted Riemann surfaces}\label{ssRiem}

Let $\bbC_+ = \{z: \imag z>0\}$, $\bbC_- = \{z:\imag z<0\}$, $\bbD=\{z:|z|<1\}$.

\begin{definition}\label{defS}
\hspace*{\fill}\\
\indent {\normalfont $\rl$} Assume $\fre$ is a finite gap set~\eqref{e}. Define $\calS_\fre$ to be the Riemann surface obtained by gluing two copies, $\calS_{\fre,+}$ and $\calS_{\fre,-}$, of $\bbC\cup\{\infty\}$ with a slit along $\fre$ $($include $\fre$ as a top edge and exclude it from the lower$)$  in the following way: passing from $\calS_{\fre,+}\cap\bbC_+ $ through $\fre$ takes us to
$ \calS_{\fre,-} \cap\bbC_- $, and from $ \calS_{\fre,+}\cap\bbC_-$ to $ \calS_{\fre,-}\cap\bbC_+$.

\smallskip{\normalfont $\uc$} Assume $\frf$ is a finite gap set~\eqref{e2}. Define $\calS_\frf$ to be the Riemann surface obtained by gluing two copies, $\calS_{\frf,+}$ and $\calS_{\frf,-}$, of $\bbC\cup\{\infty\}$ with a slit along $\frf$ $($include $\frf$ as an edge of $\bbD$ and exclude it from the edge of $\bbC\setminus\bbD)$ in the following way: passing from $\calS_{\frf,+}\cap\bbD $ through $\frf$ takes us to
$ \calS_{\frf,-} \cap\bbC\setminus\bbD$, and from $ \calS_{\frf,+}\cap\bbC\setminus\bbD$ to $ \calS_{\frf,-}\cap\bbD$.

\end{definition}
\begin{remark}
 $\calS_\fre$ is topologically a sphere with $l$ handles, while $\calS_\frf$ is topologically a sphere with $l-1$ handles.

\end{remark}

Let $\pi:\calS_\fre\to \bbC\cup\{\infty\}$ be the ``projection map'' which extends the natural inclusions $\calS_{\fre,+}\hookrightarrow \bbC\cup\{\infty\}$, $\calS_{\fre,-}\hookrightarrow \bbC\cup\{\infty\}$. We will also use $\pi$ to denote the analogous projection map $\calS_\frf\to\bbC\cup\{\infty\}$.

\begin{definition}\label{defSharp}
\hspace*{\fill}\\
\indent {\normalfont $\rl$}
\hspace*{\fill}
\begin{itemize}
\item[$\bullet$] For $z\in\bbC\cup\{\infty\}$, denote by $z_+$ and $z_-$ the two preimages $\pi^{-1}(z)$ in $\calS_{\fre,+}$ and $\calS_{\fre,-}$ respectively $($for $z\in\cup_{j=1}^{l+1}\{\alpha_j,\beta_j\}$, $z_+$ and $z_-$ coincide$)$.
\item[$\bullet$] Let $\tau:\calS_{\fre}\to\calS_\fre$ be the map that maps $z_+$ to $z_-$ and $z_-$ to $z_+$ for all $z\in\bbC\cup\{\infty\}$.
\item[$\bullet$] For a function $m$ on $\calS_\fre$, let  $m^\sharp(z)=m(\tau(z))$.
\end{itemize}

\smallskip{\normalfont $\uc$}
\begin{itemize}
\item[$\bullet$] For $z\in\bbC\cup\{\infty\}$, denote by $z_+$ and $z_-$ the two preimages $\pi^{-1}(z)$ in $\calS_{\frf,+}$ and $\calS_{\frf,-}$ respectively $($for $z\in\cup_{j=1}^{2l}\{e^{i\theta_j}\}$, $z_+$ and $z_-$ coincide$)$.
\item[$\bullet$] Let $\tau:\calS_{\frf}\to\calS_\frf$ be the map that maps $z_+$ to $z_-$ and $z_-$ to $z_+$ for all $z\in\bbC\cup\{\infty\}$.
\item[$\bullet$] For a function $F$ on $\calS_\frf$, let  $F^\sharp(z)=F(\tau(z))$.
\end{itemize}
\end{definition}

\subsection{Meromorphic functions on $\calS$}\label{ssMeromorphic}

\hspace*{\fill}\\
\indent {\normalfont $\rl$}
$\calS_\fre$ is a Riemann surface and has an associated notion of analyticity for functions $f:\calS_\fre \to \bbC$. For points $z_0\in\pi^{-1}(\bbC\setminus\cup_{j=1}^{l+1}\{\alpha_j,\beta_j\})$ we can always find a neighborhood $U$ of $z_0$ in $\calS_\fre$ on which the projection $\pi$ is one-to-one onto $\pi(U)$. Analyticity of $f$ at $z_0$ becomes equivalent to analyticity of $f(\pi^{-1}(z)):\bbC\to\bbC$ at $\pi(z_0)$. For an endpoint $z_0\in\pi^{-1}(\cup_{j=1}^{l+1}\{\alpha_j,\beta_j\})$, a function is analytic at $z_0$ if in a small neighborhood of $z_0$ on $\calS_\fre$ it can be expanded into Taylor's series
$$
f(z) = \sum_{j=0}^\infty k_j (z-z_0)^{j/2},
$$
where one fixes any branch of the square root for $z\in\calS_{\fre,+}$ and its negative for $z\in\calS_{\fre,-}$. Similarly one defines the notion of meromorphic functions.

Let us take the polynomial
\begin{equation}\label{rfre}
R_\fre(z):=\prod_{j=1}^{l+1}(z-\alpha_j)(z-\beta_j),
\end{equation}
and choose the branch of $\sqrt{R_\fre(z)}$, analytic on $\bbC\setminus\fre$, that is positive on $(\beta_{l+1},+\infty)$. Now define $\sqrt{\tilde{R}_\fre(z)}$ to be the function $\calS_\fre \to \bbC\cup\{\infty\}$ equal to $\sqrt{R_\fre(z)}$ on $\calS_{\fre,+}$ and to $-\sqrt{R_\fre(z)}$ on $\calS_{\fre,-}$. Easy to see then that this function is analytic on $\pi^{-1}(\bbC)$ and meromorphic on $\calS_\fre$. We will start using the same symbol $\sqrt{R_\fre}$ instead of $\sqrt{\til{R}_\fre}$ and hope this will not cause a confusion.

For a future reference we note that $\sqrt{R_\fre(x_+)}$ belongs to $(-1)^{l+1-k} \bbR_+$ for  $x\in(\beta_k,\alpha_{k+1})$ and to $(-1)^{l+1-k} i \bbR_+$ for $x\in(\alpha_k,\beta_k)$.

It is not hard to check (see~\cite[Prop 5.12.1]{Rice}) that any function that is meromorphic on the whole surface $\calS_\fre$ is of the form
\begin{equation}\label{pqr}
g(z)=\frac{p(z)+q(z)\sqrt{R(z)}}{a(z)}
\end{equation}
for some polynomials $p,q,a$ ($a \not\equiv 0$) that have no common zeros.

In the last formula and everywhere further in the text, whenever $z\in \calS_\fre$ and $p:\bbC\to\bbC$ is a function of a complex variable, we will routinely write $p(z)$ instead of the actual $p(\pi(z))$.

Note that if $g$ is~\eqref{pqr}, then $g^\sharp(z)$ is given by the same expression by with the minus sign in front of $\sqrt{R(z)}$.

For any function $g$ meromorphic on $\calS_\fre$ and any $a\in\bbC\cup\{\infty\}$, the number of solutions of $g(z)=a$ is independent of the $a$, if we count the solutions with multiplicities. We call this common integer the degree of $g$ and denote it by $\deg g$. We will use $\Deg p$ to denote the conventional notion of degree of a polynomial $p$. We stress that multiplicities at a branch point $z_0\in\pi^{-1}(\cup_{j=1}^{l+1}\{\alpha_j,\beta_j\})$ should be counted in powers of $(z-z_0)^{1/2}$, not $(z-z_0)$. E.g., if $g(z) = a+ (z-z_0)^{j/2} h(z)$ with $h(z_0)\ne 0$, $z_0\in\pi^{-1}(\cup_{j=1}^{l+1}\{\alpha_j,\beta_j\})$, then $z_0$ is the solution of $g(z)=a$ of multiplicity $j$, not $j/2$.

\smallskip

\uc{} The notion of analyticity/meromorphicity works in the same way  for $\calS_\frf$ as for $\calS_\fre$. The analogue of~\eqref{rfre} is the polynomial
$$
R_\frf(z):=\pm\prod_{j=1}^{2l} e^{-i\theta_j/2} (z-e^{i\theta_{j}}),
$$
where the sign is chosen so that $\{z\in\partial\bbD : z^{-l}R_\frf(z) \le 0 \} = \frf$. Indeed,
\begin{equation}\label{reality}
e^{-il\theta}R_\frf(e^{i\theta}) = \pm 2^{2l} (-1)^l \prod_{j=1}^{2l} \sin\tfrac{\theta-\theta_j}{2},
\end{equation}
which is real and of the same sign on $\frf$.

If $l$ is even then on $\calS_{\frf,+}$ we pick the square root in $\sqrt{R_\frf(z)}$ that satisfies $\imag z^{-l/2} \sqrt{R_\frf(z)} \ge 0$  for $z=e^{i\theta}, \theta\in [\theta_1,\theta_2]$, and we extend it to $\calS_{\frf,-}$ by defining  $\sqrt{R_\frf(z)}^\sharp = - \sqrt{R_\frf(z)}$. Such a function is analytic on $\pi^{-1}(\bbC)$ and meromorphic on $\calS_\frf$.

If $l$ is odd, we first take $z^{-1/2}$ with the branch cut $e^{i\theta_1} \bbR_+$, and  in $\sqrt{R_\frf(z)}$ we pick the branch of the square root that has $\imag z^{-l/2} \sqrt{R_\frf(z)} \ge 0$  for $z=e^{i\theta}, \theta\in [\theta_1,\theta_2]$ (alternatively, one can also use the ``sieving'' idea, see~\cite[Sect 11.7]{OPUC2}). Then we extend $\sqrt{R_\frf(z)}$ to $\calS_{\frf,-}$ by defining  $\sqrt{R_\frf(z)}^\sharp = - \sqrt{R_\frf(z)}$. Such a function is analytic on $\pi^{-1}(\bbC)$ and meromorphic on $\calS_\frf$.

For a future reference we note that $e^{-il\theta/2}\sqrt{R_\frf((e^{i\theta})_+)}$ belongs to $(-1)^{k-1} i \bbR_+$ for  $\theta\in(\theta_{2k-1},\theta_{2k})$ and to $(-1)^{k-1} \bbR_+$ for $\theta\in(\theta_{2k},\theta_{2k+1})$, $1\le k \le l$.

\subsection{Periodic and almost periodic operators}\label{ssPeriodic}
We call a Jacobi (CMV) operator  periodic if its Jacobi (Verblunsky) coefficients are periodic, that is, there exists $p\ge 1$ such that $a_{n+p} = a_n$ and $b_{n+p} = b_n$ ($\alpha_{n+p} = \alpha_n$) for all $n$. 
 For the special case of constant coefficients (that is, $p=1$) we call these the free Jacobi and the free CMV operator, respectively.

We will call a sequence $\{s_j\}_{j=1}^\infty$ quasiperiodic with at most $q$ quasiperiods if there exists a continuous function $f$ on the $q$-torus $\partial\bbD^q$ and real numbers (quasiperiods) $w_1,\ldots,w_q$ such that $s_j = f(e^{i j w_1},\ldots,e^{ij w_q})$.

Accordingly, we will refer to a Jacobi (CMV) operator as almost periodic with $q$ quasiperiods if its Jacobi (Verblunsky) coefficients are quasiperiodic with at most $q$ quasiperiods. One should think of $p$-periodic operators as that special case of almost periodic operators with at most $p$ quasiperiods when all quasiperiods are integer multiples of $\tfrac{2\pi}{p}$.

For a future reference, notice that just like for periodic operators, knowing $\calJ[a_n,b_n]_{n=N_0}^\infty$ or $\calC[\alpha_n]_{n=N_0}^\infty$ of an almost periodic operator uniquely determines the full operator $\calJ[a_n,b_n]_{n=1}^\infty$ or $\calC[\alpha_n]_{n=0}^\infty$. In fact one can uniquely extend it to the two-sided almost periodic operator on $\ell^2(\bbZ)$.

It is well known that the essential spectrum of a periodic Jacobi (CMV) operator is a finite gap set. The essential spectrum of an almost periodic could be a finite gap or an infinite gap set.

Given a finite gap set $\fre$ (or $\frf$) one may ask whether it can be the essential spectrum of a periodic or almost periodic operator. The answer is always yes, and in fact, there exists a whole $l$-dimensional set (topologically an $l$-dimensional torus $(S^1)^l$) of such operators that we will refer to as the isospectral torus. The following classifies when these operators are periodic or almost periodic:

\smallskip
\rl{}
\begin{itemize}
\item If each interval $[\alpha_j,\beta_j]$ in $\fre$ has rational harmonic measure, then there exists a periodic Jacobi operator with $\fre$ as its essential spectrum.
\item If one of the intervals $[\alpha_j,\beta_j]$ in $\fre$ has irrational harmonic measure, then there exists an almost periodic with at most $l$ quasiperiods Jacobi operator  with $\fre$ as its essential spectrum.
\end{itemize}

\uc{}
\begin{itemize}
\item If each band in $\frf$ has rational harmonic measure and $\prod_{j=1}^{2l} e^{i\theta_j} = 1$, then there exists a periodic CMV operator with $\frf$ as its essential spectrum.
\item If each band in $\frf$ has rational harmonic measure and $\prod_{j=1}^{2l} e^{i\theta_j} \ne 1$, then there exists a CMV operator, periodic up to a phase (that is, $\alpha_{n+p} = \lambda \alpha_{n}$ for some $\lambda\in\partial\bbD$), with $\frf$ as its essential spectrum.\footnote{This is of course just a special case of an almost periodic operator with $\Arg \lambda$ as a quasiperiod.}
\item If one of the bands in $\frf$ has irrational harmonic measure, then there exists an almost periodic with at most $l$ quasiperiods CMV operator with $\frf$ as its essential spectrum.
\end{itemize}

\subsection{Herglotz and Carath\'{e}odory functions}
To each Jacobi operator $\calJ$ and its spectral measure $\mu$,~\eqref{spectralDefinition}, we can associate
\begin{equation}\label{m}
m(z):=\int_\bbR \frac{d\mu(x)}{x-z}, \quad z\notin\esssup\mu,
\end{equation}
the Borel/Stieltjes/Cauchy transform of $\mu$. From~\eqref{spectralDefinition}, $m$ is also the $(1,1)$-entry of the resolvent of $\calJ$. We will refer to this function as the $m$-function of $\calJ$.

$m$ is a Herglotz function, meaning that $\imag m(z)>0$ whenever $\imag z>0$, and $\imag m(z)<0$ whenever $\imag z<0$. It follows from the definition that
\begin{equation}\label{symmM}
\overline{m(\bar{z})} = m(z).
\end{equation}

Let us introduce the notation
\begin{equation*}
\calJ^{(s)} := \calJ[a_{n+s},b_{n+s}]_{n=1}^\infty,
\end{equation*}
that is, $\calJ^{(s)}$ is the Jacobi matrix obtained from $\calJ$ by removing the first $s$ rows and columns. In particular, $\calJ^{(0)}$ is just $\calJ$.

The $m$-functions $m$ and $m^{(1)}$ of $\calJ$ and $\calJ^{(1)}$ are known to obey
\begin{equation}\label{stripping}
a_1^2 m^{(1)}(z) = b_1-z-\frac{1}{m(z)}.
\end{equation}
Indeed this follows immediately from the Schur complement formula.

\smallskip

To each CMV operator $\calC$ and its spectral measure $\mu$,~\eqref{spectralDefinition2}, we can associate
\begin{equation}\label{F}
F(z):=\int_0^{2\pi} \frac{e^{i\theta} + z}{e^{i\theta} -z}d\mu(\theta), \quad z\notin\esssup\mu,
\end{equation}
which we will call the Carath\'{e}odory function of $\mu$. From~\eqref{spectralDefinition2}, $\frac{F(z) - 1}{2z}$ is the $(1,1)$-entry of the resolvent of $\calC$.

$F$ is a Carath\'{e}odory function, by which we mean a function satisfying $\real F(z)>0$ whenever $z\in\bbD$, $\real F(z)<0$ whenever $z\in\bbC\setminus\overline{\bbD}$,  and $F(0)=1$. The counterpart of~\eqref{symmM} is
\begin{equation}\label{symmF}
\overline{F(1/\bar{z})} = - F(z),
\end{equation}
which follows immediately from~\eqref{F}.

If $\calC=\calC[\alpha_n]_{n=0}^\infty$ define $\calC^{(s)} := \calC[\alpha_{n+s}]_{n=0}^\infty$. The Carath\'{e}odory functions $F$ and $F^{(1)}$ of $\calC$ and $\calC^{(1)}$ are known (see, e.g.,~\cite[Eq. (11.7.73)]{OPUC2}) to satisfy
\begin{equation}\label{strCar}
\frac{F^{(1)}(z)+1}{F^{(1)}(z)-1} = \frac{z(F(z)+1)- \bar\alpha_0(F(z)-1)}{-z\alpha_0(F(z)+1) + (F(z)-1)}.
\end{equation}

In the next lemma we show how one can recover the absolutely continuous and pure point parts of the measure from knowing $m$ or $F$.

\begin{lemma}[Herglotz Representation Theorem]\label{herglotz}

\hspace*{\fill}\\
\indent {\normalfont $\rl$}
Let $m$ be~\eqref{m} for some probability measure $\mu$ on $\bbR$. Then the absolutely continuous part of $\mu$ can be recovered by
\begin{equation}\label{herg1}
\frac{d\mu}{dx}=\frac{1}{\pi} \lim_{\veps\downarrow0} \imag m(x+i \veps),
\end{equation}
$($Lebesgue a.e.$)$, and the pure point part by
\begin{equation}\label{herg2}
\mu(\{\lambda\})=\lim_{\veps \downarrow 0} \veps\, \imag m(\lambda+i\veps) = \tfrac{1}{i} \lim_{\veps\downarrow 0} \veps\, m(\lambda+i\veps).
\end{equation}

\smallskip{\normalfont $\uc$}
Let $F$ be~\eqref{F} for some probability measure $\mu$ on $\partial\bbD$. Then the absolutely continuous part of $\mu$ can be recovered by
\begin{equation}\label{cara1}
\frac{d\mu}{d\theta}=\frac{1}{2\pi} \lim_{r \uparrow 1} \real F(r e^{i\theta}),
\end{equation}
$($Lebesgue a.e.$)$,
and the pure point part by
\begin{equation}\label{cara2}
\mu(\{\theta\})=\lim_{r \uparrow 1} \left( \frac{1-r}{2}\right) F(r e^{i\theta}).
\end{equation}
\end{lemma}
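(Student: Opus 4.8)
The plan is to reduce both parts to the boundary behavior of Poisson integrals. For $\rl$, taking imaginary parts in~\eqref{m} with $z=x+i\veps$ gives
$$\imag m(x+i\veps)=\int_\bbR\frac{\veps}{(t-x)^2+\veps^2}\,d\mu(t)=\pi\,(P_\veps*\mu)(x),$$
where $P_\veps(u)=\tfrac1\pi\tfrac{\veps}{u^2+\veps^2}$ is the Poisson kernel of the upper half-plane. For $\uc$, taking real parts in~\eqref{F} with $z=re^{i\theta}$ and using $\real\frac{e^{i\phi}+re^{i\theta}}{e^{i\phi}-re^{i\theta}}=\frac{1-r^2}{1-2r\cos(\theta-\phi)+r^2}$ gives $\real F(re^{i\theta})=2\pi\,(P_r*\mu)(\theta)$, where $P_r(u)=\tfrac1{2\pi}\tfrac{1-r^2}{1-2r\cos u+r^2}$ is the Poisson kernel of the disk. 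In both cases the kernel is a nonnegative approximate identity as $\veps\downarrow0$ (resp. $r\uparrow1$), so the two settings run in parallel with $\veps$ playing the role of $1-r$.

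First I would establish the absolutely continuous formulas~\eqref{herg1} and~\eqref{cara1}. Decomposing $\mu=\mu_\ac+\mu_\sing$, the contribution of $\mu_\ac$ is handled by the Lebesgue differentiation theorem: at every Lebesgue point of $\frac{d\mu_\ac}{dx}$ (hence Lebesgue-a.e.) the Poisson average $(P_\veps*\mu_\ac)(x)$ converges to $\frac{d\mu_\ac}{dx}(x)$. The contribution of $\mu_\sing$ vanishes a.e.: since $\mu_\sing$ is carried by a set of Lebesgue measure zero, its symmetric derivative is $0$ for Lebesgue-a.e. $x$, and the standard domination of the Poisson average by the Hardy--Littlewood maximal function of $\mu_\sing$ then forces $(P_\veps*\mu_\sing)(x)\to0$ a.e. This is exactly Fatou's theorem on radial limits of Poisson integrals of measures, which I would invoke by reference. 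The disk statement~\eqref{cara1} is identical word for word.

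Next I would prove the pure point formulas~\eqref{herg2} and~\eqref{cara2} by a direct dominated-convergence argument that needs no differentiation theory. For $\rl$,
$$\veps\,\imag m(\lambda+i\veps)=\int_\bbR\frac{\veps^2}{(t-\lambda)^2+\veps^2}\,d\mu(t);$$
the integrand is bounded by $1$ and tends pointwise to $\bdone_{\{t=\lambda\}}$, so dominated convergence yields $\mu(\{\lambda\})$. The second equality in~\eqref{herg2} follows the same way, since the real part of $\frac{\veps}{t-\lambda-i\veps}$ stays bounded and tends to $0$ pointwise. For $\uc$, writing $z=re^{i\theta}$, the kernel $\frac{1-r}{2}\frac{e^{i\phi}+re^{i\theta}}{e^{i\phi}-re^{i\theta}}$ has real part $\frac{(1-r)^2(1+r)}{2(1-2r\cos(\theta-\phi)+r^2)}\le1$ and, using $1-2r\cos u+r^2=(1-r)^2+4r\sin^2\tfrac u2$ together with AM--GM, imaginary part bounded by $\tfrac12$; pointwise this kernel tends to $\bdone_{\{\phi=\theta\}}$, and dominated convergence again gives $\mu(\{\theta\})$.

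The only genuinely nontrivial ingredient is the a.e. vanishing of the Poisson integral of the singular part, i.e. Fatou's theorem; everything else is elementary. I therefore expect the main obstacle to be expository rather than mathematical: the half-plane and disk cases should be arranged so that this classical boundary-value input is cited once rather than twice, while the pure-point computations, though mechanical, still require the small kernel estimates above to justify the uniform domination.
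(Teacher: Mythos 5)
Your proof is correct: the Poisson-kernel identities, the reduction of \eqref{herg1}/\eqref{cara1} to Fatou's theorem on boundary values of Poisson integrals, and the dominated-convergence computations for \eqref{herg2}/\eqref{cara2} (including the AM--GM bounds on the kernels) all check out. The paper states this lemma without proof, as a classical fact, so there is nothing to compare against; your argument is the standard one and would serve as a proof if one were wanted.
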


%

We will be particularly interested in the Herglotz and Carath\'{e}odory functions that have a meromorphic continuations from $\bbC\setminus\fre$ to $\calS_\fre$ and from $\bbC\setminus\frf$ to $\calS_\frf$. The following subclass deserves a special name. These are precisely the $m$-functions and Carath\'{e}odory functions of Jacobi and CMV operators from Subsection~\ref{ssPeriodic}.
\begin{definition}\label{defMinimal}

\hspace*{\fill}\\
\indent {\normalfont $\rl$}
A minimal Herglotz function on $\calS_\fre$ is a function that is meromorphic on $\calS_\fre$ and obeys
\begin{itemize}
\item[(i)] $m$ restricted to $\calS_{\fre,+}$ satisfies~\eqref{m} for some probability measure $\mu$ on $\bbR$; 
\item[(ii)] $\deg m = l+1$;
\item[(iii)] $m$ has a pole at $\infty_-$.
\end{itemize}

\smallskip{\normalfont $\uc$}
A minimal Carath\'{e}odory function on $\calS_\frf$ is a function that is meromorphic on $\calS_\frf$ and obeys
\begin{itemize}
\item[(i)] $F$ restricted to $\calS_{\frf,+}$ satisfies~\eqref{F} for some probability measure $\mu$ on $\partial\bbD$;
\item[(ii)] $\deg F = l$.
\end{itemize}
\end{definition}
\begin{remarks}
1. For $l=0$, \uc{}, the condition (ii) should be interpreted as $F=\operatorname{const}$. 

2. We note that conditions (3) and (4) in the definition of minimal Carath\'{e}odory function in~\cite[p. 767]{OPUC2} are in fact automatic from the condition (i),~\eqref{pqr}, and~\eqref{symmF}.

3. The term ``minimal'' comes from the fact that any function on $\calS_\fre$ of the form~\eqref{pqr} with $q\not\equiv 0$ has degree $l+1$ or higher (for $\calS_\frf$ --- degree $l$ or higher).

4. The degree condition (ii) implies (see~\cite[Thm 5.13.2]{Rice}) that $q$ is constant, and therefore that every minimal Herglotz function is of the form
\begin{equation*}\label{mAgain}
\frac{p(z) + \sqrt{R(z)}}{a(z)}.
\end{equation*}
Similarly (\cite[Thm 11.7.10]{OPUC2}), minimal Carath\'{e}odory functions are of the same form. 
\end{remarks}

There is a one-to-one correspondence between all minimal Herglotz (Carath\'{e}odory) functions and the configurations of their poles (the so-called ``Dirichlet data''). Let us label the preimages of gaps under $\pi$ as follows:
\begin{equation*}\label{gap}
G_j := \pi^{-1}([\beta_j,\alpha_{j+1}]),\quad j=1,\ldots,l
\end{equation*}
for $\calS_\fre$ and
\begin{equation}\label{gapF}
G_j := \pi^{-1}(\{e^{i\theta}: \theta_{2j}\le \theta \le \theta_{2j+1}\}), \quad j=1,\ldots, l
\end{equation}
for $\calS_\frf$, where we adopt the convention $\theta_{2l+1}:=\theta_1+2\pi$. Note that each $G_j$ is topologically a circle, so that $\times_{j=1}^l G_j$ is an $l$-torus.

\begin{lemma}\label{lemMinimal}
\hspace*{\fill}\\
\indent {\normalfont $\rl$}
Every minimal Herglotz function has $l$ finite poles, each simple, one on each $G_j$ $(j=1,\ldots,l)$, and the map from minimal Herglotz functions to its finite poles is one-to-one and onto $\times_{j=1}^l G_j$.

\smallskip{\normalfont $\uc$}
Every minimal Carath\'{e}odory function has its $l$ poles, each simple, one on each $G_j$ $(j=1,\ldots,l)$, and the map from minimal Carath\'{e}odory functions to its poles is one-to-one and onto $\times_{j=1}^l G_j$.
\end{lemma}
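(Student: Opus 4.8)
The plan is to treat the Jacobi ($\rl$) case in detail; the CMV ($\uc$) case runs in exact parallel, with $R_\fre$ replaced by $R_\frf$, the half-plane $\bbC_+$ by $\bbD$, and the pole condition (iii) at $\infty_-$ replaced by the normalizations $F(0)=1$ and, via \eqref{symmF}, $F(\infty_+)=-1$.

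First I would fix the global shape. By Remark~4 write $m=\dfrac{p+\sqrt{R_\fre}}{a}$ with $\sqrt{R_\fre}$ the fixed branch and $p,a$ polynomials; the real-symmetry \eqref{symmM} forces $p,a\in\bbR[z]$. Using $\deg m=l+1$, the asymptotics $\sqrt{R_\fre}\sim\pm z^{l+1}$ at $\infty_\pm$, the analyticity of $m$ at $\infty_+$ with $m(\infty_+)=0$ (unit measure), and the pole at $\infty_-$, a short bookkeeping gives $\Deg p=l+1$ with leading coefficient $-1$, $\Deg a=l$, and that the pole at $\infty_-$ is simple. Hence the remaining $l$ poles (counted with multiplicity) are finite and lie over the $l$ zeros of $a$. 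To localize them I would use that $m|_{\calS_{\fre,+}}$ is Herglotz, so it has no pole over $\bbC_+\cup\bbC_-$ on the top sheet; that $\sqrt{R_\fre}$ is real on the gaps and purely imaginary on the bands, so $m$ is real-valued on each circle $G_j$; and the classical interlacing of the Dirichlet data with the band edges (this is exactly \cite[\S5.13]{Rice}, and is consistent with \eqref{stripping}, which identifies the zeros of $m$ on $\calS_{\fre,+}$ with the poles of $m^{(1)}$). This yields one simple zero of $a$ in each gap, with the corresponding pole on a single sheet of $G_j$, giving exactly one simple pole per $G_j$ and a well-defined map $\Phi\colon m\mapsto(\text{poles})\in\times_{j=1}^l G_j$.

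I would then prove injectivity and surjectivity together by reducing to a square linear system. Fix a target $(p_1,\dots,p_l)\in\times_{j=1}^l G_j$ and look for $m=\frac{p+\sqrt{R_\fre}}{a}$ realizing it. Set $a(z)=c_a\prod_{j=1}^l\bigl(z-\pi(p_j)\bigr)$ and $p(z)=-z^{l+1}+\sum_{k=0}^l c_kz^k$. The constraints are: for each $j$ the pole sits on the prescribed sheet of $G_j$, i.e.\ the opposite-sheet value is finite, which (since $a$ vanishes at $\pi(p_j)$) reads $p(\pi(p_j))=\mp\sqrt{R_\fre(\pi(p_j))}$; together with $m(\infty_+)=0$ and $\lim_{z\to\infty_+}z\,m(z)=-1$. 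Because $\sqrt{R_\fre}$ is real on the gaps, these are $l+2$ real-linear equations in the $l+2$ real unknowns $c_0,\dots,c_l,c_a$. Unique solvability of this system (its determinant is nonzero for every configuration, again by the interlacing) simultaneously gives uniqueness of $(p,a)$, hence injectivity of $\Phi$, and produces the unique candidate needed for surjectivity. Branch-point configurations (a pole at a gap endpoint) are handled as the limiting case of this system.

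The main obstacle is the last step: verifying that the candidate $m$ just constructed is genuinely a \emph{minimal Herglotz} function, i.e.\ that $\imag m>0$ on $\bbC_+$ and $\deg m=l+1$. Half of this is structural: on a band $(\alpha_k,\beta_k)$ the boundary value is $\imag m(x+i0)=\dfrac{\imag\sqrt{R_\fre}(x)}{a(x)}$, which is of one sign on that band because $\sqrt{R_\fre}$ is purely imaginary there and $a$ does not vanish on the bands, so the candidate a.c.\ density is $\pm\tfrac1\pi|\sqrt{R_\fre}(x)|/|a(x)|$. The subtle part is global consistency for \emph{every} point of $\times_{j=1}^lG_j$: that the sign is $+$ on every band at once, that a pole on the $+$ sheet of $G_j$ yields a positive point mass while a pole on the $-$ sheet (a resonance) yields none, that no spurious entire summand appears (forced by $m(\infty_+)=0$ together with uniqueness of the Herglotz representation), and that the total mass equals $1$. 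Assembling these positivity facts uniformly is precisely the content of the classical finite-gap theory, carried out via potential theory and theta functions in \cite[\S5.13]{Rice} (and \cite[\S11.7]{OPUC2} for the Carath\'eodory case); I would invoke that machinery rather than reprove it, so that the substance of the present lemma is the pole bookkeeping and the linear-algebra uniqueness above.
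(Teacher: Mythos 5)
The first thing to say is that the paper contains no proof of Lemma~\ref{lemMinimal}: it is stated as a known fact and implicitly imported from \cite[Sect.\ 5.13]{Rice} and \cite[Sect.\ 11.7]{OPUC2} (the same sources cited in the remarks surrounding Definition~\ref{defMinimal}). So there is no in-paper argument to match yours against; the comparison has to be with the classical proof and with the techniques the paper itself uses later. Against that benchmark, the part of your proposal that is genuinely self-contained is the injectivity/uniqueness step: the $(l+2)\times(l+2)$ linear system in the coefficients of $p$ and the leading coefficient of $a$, nonsingular because the $l$ nodes $\pi(p_j)$ are distinct (one per gap). That is correct. But the two substantive claims of the lemma are both delegated back to the source you would be proving it from: the localization ``exactly one simple pole on each $G_j$ and none elsewhere'' is obtained by citing ``the classical interlacing of the Dirichlet data,'' and surjectivity --- that the candidate your linear system produces is actually a Herglotz function with the right sign on every band, positive point masses, and total mass one --- is explicitly deferred to the same machinery. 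Since those two facts \emph{are} the lemma, the proposal as written reduces the statement to itself plus linear algebra. Note also that your Herglotz-on-the-top-sheet observation only excludes poles over $\bbC_+\cup\bbC_-$ on $\calS_{\fre,+}$; it says nothing about a pole at $(z_0)_-$ with $z_0$ non-real (which is compatible with the symmetry \eqref{symmM}, since the conjugate pole would also land on the bottom sheet), nor about poles over $(-\infty,\alpha_1)\cup(\beta_{l+1},+\infty)$, so the localization cannot be dispatched this way.

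For the localization there is a short self-contained argument, and it is the one the paper itself deploys inside the proof of Theorem~\ref{thmM}: by \eqref{densityM}, $d\mu/dx=\sqrt{R_\fre(x)}/(\pi i\,a(x))\ge 0$ on $\fre$, and since $\sqrt{R_\fre(x_+)}\in(-1)^{l+1-k}i\bbR_+$ alternates from band to band, the real polynomial $a$ must change sign across every gap, hence has an odd number of zeros (with multiplicity) in each $[\beta_j,\alpha_{j+1}]$, giving $\Deg a\ge l$; on the other hand the pole at $\infty_-$ forces $\Deg a\le l$ by comparing the leading terms of $p\mp\sqrt{R_\fre}$. Hence $\Deg a=l$, with exactly one simple zero per gap and none elsewhere, and the degree count ($\deg m=l+1$, equivalently condition $(2a)$ of Lemma~\ref{lemEqualDegree}) puts each resulting pole on exactly one sheet of $G_j$. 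This also shows that your claim that ``a short bookkeeping gives $\Deg a=l$'' is optimistic: without the band-sign input one only gets $\Deg a\le l$, and a double pole at $\infty_-$ with fewer finite poles is not excluded by bookkeeping alone. For surjectivity, invoking the theta-function/potential-theoretic construction of \cite{Rice} and \cite{OPUC2} is legitimate --- the paper does the same by not proving the lemma at all --- but you should say plainly that this is where the existence half of the bijection lives: the linear system produces a meromorphic candidate, and nothing in it guarantees $\imag m>0$ on $\bbC_+$.
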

\begin{remarks}
1. By the definition, each minimal Herglotz function has also a pole at $\infty_-$, see Def.~\ref{defMinimal}(iii). This makes the total of $l+1$ poles on $\calS_\fre$, which agrees with Def.~\ref{defMinimal}(ii).

2. Just like here, we will be looking to first classify all the Herglotz and Carath\'{e}odory functions of our finite range perturbations (Theorem~\ref{thmM}), and then we will classify their poles (Theorem~\ref{thmR}).
\end{remarks}

\subsection{Resonances}

If $m$ is meromorphic on the whole surface $\calS_\fre$ (we no longer assume $m$ is minimal), then resonances of $\calJ$ are defined to be the poles of $m$  on $\calS_{\fre,-}\setminus\{\infty_-\}$. At a band edges one should be more careful: a pole of $m$ of order $2$ is an eigenvalue of $\calJ$, while a pole of order $1$ is a resonance. The notion of resonances for $\calC$ and $F$ is analogous. Finally, the resonances of $\calJ$ on $\bbR$ (resp., $\calC$ on $\partial\bbD$) will be referred to as anti-bound states. Both eigenvalues and resonances of $\calJ$ or $\calC$ will be called singularities of $\calJ$ or $\calC$.

\subsection{Isospectral tori}

Using the relations~\eqref{stripping} (or~\eqref{strCar}), it is easy to see that the $m$-functions (Carath\'{e}odory functions) of periodic operators satisfy some quadratic equations. It should not be too surprising then that the solutions of this quadratic equation has a meromorphic continuation to a two-sheeted Riemann surface. In fact, $m$ (respectively, $F$) is a minimal Herglotz (Carath\'{e}odory) function, and conversely every minimal Herglotz (Carath\'{e}odory) function is an $m$-function (Carath\'{e}odory function) of some periodic Jacobi (CMV) operator with $\sigma_{ess}(\calJ)=\fre$ ($\sigma_{ess}(\calC)=\frf$). This explains the motivation behind the following definition.

\begin{definition}\label{defTorus}
\hspace*{\fill}\\
\indent {\normalfont $\rl$}
The isospectral torus $\calT_\fre$ is defined to be all the Jacobi operators whose $m$-functions are minimal Herglotz functions on $\calS_\fre$.

\smallskip{\normalfont $\uc$}
The isospectral torus $\calT_\frf$ is defined to be all the CMV operators whose Carath\'{e}odory functions are minimal Ca\-ra\-th\'{e}o\-do\-ry functions on $\calS_\fre$.
\end{definition}
\begin{remarks}
1. When $l=0$ the isospectral torus consists of one operator. If $\fre = [-2,2]$, then $\calT_\fre$ is the free Jacobi operator ($a_n=1, b_n=0$ for all $n$), and if $\frf = \partial\bbD$ then $\calT_\frf$ is the free CMV operator ($\alpha_n=0$ for all $n$).

2. When the harmonic measures of each band of $\fre$ are rational (see Subsection~\ref{ssPeriodic}) then $\calT_\fre$ consists precisely of all the periodic Jacobi operators with the essential spectrum $\fre$.  Similarly for $\frf$, but now the CMV operators could be periodic up to a phase (see Subsection~\ref{ssPeriodic}).

3. In general $\calT_\fre$ and $\calT_\frf$ consist of almost periodic operators with at most $l$ quasiperiods.

4. As was shown in~\cite{CSZ1,KruSim14}, if one extends the Jacobi matrices from $\calT_\fre$ to two-sided matrices, then an equivalent description of $\calT_\fre$ could be: (a) all the two-sided Jacobi matrices with $\sigma(\calJ)=\fre$ that are reflectionless on $\fre$ or (b) all the two-sided Jacobi operators with $\sigma(\calJ)=\fre$ that are almost periodic and regular.
\end{remarks}

\subsection{Finite range perturbations}
Let us classify all of the finite range perturbations by the number of the ``wrong'' (that is, not ``almost periodic'') coefficients. Note that if $\calJ^{(k)} = \calJ_\circ^{(k)}$ and $\calJ^{(k-1)} \ne \calJ_\circ^{(k-1)}$ then either the $b_k$ coefficients or the $a_k$ coefficients (or both) of $\calJ$ and $\calJ_\circ$ differ. This can be captured by saying that $\ran\big[\calJ^{(s-1)}-\calJ_\circ^{(s-1)}\big]$ is $1$ (which means $a_k$'s agree but $b_k$'s differ) or $2$ ($a_k$'s differ, possibly $b_k$'s too).

\begin{definition}\label{defFiniteRange}
\hspace*{\fill}\\
\indent {\normalfont $\rl$}
Let $\calT_\fre$ be the isospectral torus of Jacobi operators associated with a finite gap set $\fre$.
\begin{itemize}
\item Denote by $\calT_\fre^{[2s-1]}$ the set of all Jacobi operators $\calJ$ for which there exists $\calJ_\circ\in\calT_\fre$ such that $\calJ^{(s)}=\calJ_\circ^{(s)}$ and $\ran\big[\calJ^{(s-1)}-\calJ_\circ^{(s-1)}\big] = 1$;
\item Denote by $\calT_\fre^{[2s]}$ the set of all Jacobi operators $\calJ$ for which there exists $\calJ^\circ\in\calT_\fre$ such that $\calJ^{(s)}=\calJ^\circ$ and $\ran\big[\calJ^{(s-1)}-\calJ_\circ^{(s-1)}\big] = 2$.
\end{itemize}

\smallskip{\normalfont $\uc$}
Let $\calT_\frf$ be the isospectral torus of CMV operators associated with a finite gap set $\frf$.
\begin{itemize}
\item Denote by $\calT_\frf^{[s]}$ the set of all CMV operators $\calC$  for which $\calC^{(s)} \in \calT_\frf$, $\calJ^{(s-1)}\notin\calT_\frf$.
\end{itemize}

\end{definition}
\begin{remarks}
1. Both $\{a_j\}$ and $\{b_j\}$ sequences are \textit{eventually} almost periodic. One should think of the of the index $k$ in $\calT_\fre^{[k]}$ as the smallest number such that deleting the \textit{first} $k$ coefficients from the sequence $b_1,a_1,b_2,a_2,\ldots$ makes it almost periodic. It is important to put the $b$'s coefficients before the $a$'s here.


2. Thus the set of all finite range perturbations of $\calT_\fre$ splits into the disjoint union $\calT_\fre^{[0]}\cup \calT_\fre^{[1]} \cup \ldots \cup \calT_\fre^{[k]} \cup\ldots$, where $\calT_\fre^{[0]}$ is just $\calT_\fre$. Similarly for $\calT_\frf$.
\end{remarks}

\section{Classification of $m$-functions and Carath\'{e}odory functions}\label{sMandF}

Let us prove an easy lemma first.

\begin{lemma}\label{lemEqualDegree}
\hspace*{\fill}\\
\indent {\normalfont $\rl$}
Let
\begin{equation}\label{nonminimal1}
m(z)=\frac{p(z)+\sqrt{R_\fre(z)}}{a(z)}
\end{equation}
for some polynomials $p,a$ $(a\not\equiv0)$. The following three conditions are equivalent:
\begin{itemize}
\item[$(1)$] $\deg m = \Deg a$;
\item[$(2)$]
\begin{itemize}
\item[(a)] 
$m(z_0)=\infty$ for some $z_0\in \pi^{-1}(\bbC\setminus\cup_{j=1}^{l+1} \{\alpha_j,\beta_j\})$ implies $m^\sharp(z_0)\neq \infty$;
\item[(b)] If $z_0\in\pi^{-1}(\cup_{j=1}^{l+1} \{\alpha_j,\beta_j\})$ then it is at most simple pole of $m$.
\item[(c)] $\infty_+$ and $\infty_-$ are not poles of $m$.
\end{itemize}
\item[$(3)$]
\begin{itemize}
\item[(a)] 
$m(z_0)=\infty$ for some $z_0\in \pi^{-1}(\bbC\setminus\cup_{j=1}^{l+1} \{\alpha_j,\beta_j\})$ implies $m^\sharp(z_0)\neq \infty$;
\item[(b)] If $z_0\in\pi^{-1}(\cup_{j=1}^{l+1} \{\alpha_j,\beta_j\})$ then it is at most simple pole of $m$.
\item[(c)] $\Deg a \ge l+1$;
\item[(d)] $\Deg p \le \Deg a$;
\end{itemize}
\end{itemize}

\smallskip{\normalfont $\uc$}
Let
\begin{equation}\label{nonminimalF}
F(z)=\frac{p(z)+z^k\sqrt{R_\frf(z)}}{a(z)}
\end{equation}
for some integer $k\ge 0$ and some polynomials $p,a$ with $a(0)\ne 0$. 
The following three conditions are equivalent:
\begin{itemize}
\item[$(1)$] $\deg F = \Deg a$;
\item[$(2)$]
\begin{itemize}
\item[(a)] 
$F(z_0)=\infty$ for some $z_0\notin \pi^{-1}(\{0\}\cup_{j=1}^{2l} \{e^{i\theta_j}\})$ implies $F^\sharp(z_0)\neq \infty$.
\item[(b)] If $z_0\in\pi^{-1}(\cup_{j=1}^{2l} \{e^{i\theta_j}\})$ then it is at most simple pole of $F$.
\item[(c)] $\infty_+$ and $\infty_-$ are not poles of $F$.
\end{itemize}
\item[$(3)$]
\begin{itemize}
\item[(a)] 
$F(z_0)=\infty$ for some $z_0\notin \pi^{-1}(\{0\}\cup_{j=1}^{2l} \{e^{i\theta_j}\})$ implies $F^\sharp(z_0)\neq \infty$.
\item[(b)] If $z_0\in\pi^{-1}(\cup_{j=1}^{2l} \{e^{i\theta_j}\})$ then it is at most simple pole of $F$.
\item[(c)] $\Deg a \ge l+k$;
\item[(d)] $\Deg p \le \Deg a$;
\end{itemize}
\end{itemize}
\end{lemma}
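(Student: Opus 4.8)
The plan is to prove the \rl{} and \uc{} statements by a single pole-counting argument, since they are structurally identical once one tracks the behavior at the distinguished points: $\infty_\pm$ in both cases, and additionally $0_\pm$ for \uc{}. I will establish $(1)\Leftrightarrow(2)$ through a global count of the poles of $m$ (resp.\ $F$) on $\calS_\fre$ (resp.\ $\calS_\frf$), using that $\deg m$ equals the total pole order and that $\Deg a=\sum_\zeta k_\zeta$, the sum of multiplicities of the finite roots $\zeta$ of $a$. Since conditions (2) and (3) share items (a) and (b), the remaining equivalence $(2)\Leftrightarrow(3)$ reduces to showing $(2c)\Leftrightarrow(3c)\wedge(3d)$, which I will obtain by a leading-order analysis at $\infty_\pm$.

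For $(1)\Leftrightarrow(2)$ in the \rl{} case, note that the only poles of $m=(p+\sqrt{R_\fre})/a$ sit over roots of $a$ and at $\infty_\pm$. The key elementary observation is that at a non-branch point $\zeta$ the numerator $p+\sqrt{R_\fre}$ vanishes at \emph{at most one} of the sheets $\zeta_\pm$: vanishing at both forces $p(\zeta)=0$ and $\sqrt{R_\fre(\zeta_+)}=0$, i.e.\ $\zeta$ a branch point. Consequently, a non-branch root $\zeta$ of multiplicity $k$ contributes at least $k$ to $\deg m$, with equality exactly when $m$ is not simultaneously singular at $\zeta_+$ and $\zeta_-$ --- which is condition (2a). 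At a branch point $z_0$ I work in the local coordinate $w=(z-z_0)^{1/2}$: there $a$ vanishes to order $2k$ while $\sqrt{R_\fre}$ vanishes to order $1$, so $m$ has a pole of order $2k$ if $p(z_0)\ne0$ and of order $2k-1$ if $p(z_0)=0$; either way this is $\ge k$, with equality iff the pole is at most simple, which is (2b). Finally $\infty_\pm$ are not roots of $a$, so their pole contribution is $\ge0$ with equality iff (2c). Summing, $\deg m\ge\Deg a$ with equality precisely when (2a), (2b), (2c) all hold, giving $(1)\Leftrightarrow(2)$.

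For $(2c)\Leftrightarrow(3c)\wedge(3d)$ I use $\sqrt{R_\fre}\sim\pm z^{l+1}$ on $\calS_{\fre,\pm}$. If $\Deg p\ne l+1$ there is no cancellation and the numerator has order $\max(\Deg p,l+1)$ at both infinities; if $\Deg p=l+1$ the leading terms can cancel on at most one sheet, so the numerator still has order exactly $l+1$ on the other. Hence ``no pole at $\infty_\pm$'' is equivalent to $\Deg a\ge l+1$ together with $\Deg p\le\Deg a$, i.e.\ (3c) and (3d). Combined with the shared (a),(b), this yields $(2)\Leftrightarrow(3)$.

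The \uc{} case is handled identically, reading $z^k\sqrt{R_\frf}$ for $\sqrt{R_\fre}$: the branch-point computation is unchanged because $z_0=e^{i\theta_j}\ne0$ makes $z^k$ a nonvanishing analytic unit there, so $\sqrt{R_\frf}$ again supplies the order-$1$ term; the extra points $0_\pm$ are harmless since $a(0)\ne0$ keeps them out of the pole set; and at $\infty_\pm$ one has $z^k\sqrt{R_\frf}\sim\pm\lambda z^{k+l}$ with $\lambda\ne0$, so $l+1$ is replaced throughout by $k+l$, producing (3c) in the form $\Deg a\ge l+k$. The only step demanding genuine care --- and hence the main obstacle --- is the branch-point bookkeeping: one must count multiplicities consistently in the half-integer coordinate $(z-z_0)^{1/2}$, correctly distinguish the orders $2k$ versus $2k-1$ according to whether $p(z_0)=0$, and match ``pole order $\le1$'' with the contribution $k$. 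Everything else is routine leading-order analysis at the points over $\infty$.
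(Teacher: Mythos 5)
Your proposal is correct and takes essentially the same route as the paper: both rest on the global inequality $\deg m\ge\Deg a$ obtained by comparing, root by root of $a$ (and at $\infty_\pm$), the pole contribution to $\deg m$ with the multiplicity contribution to $\Deg a$, with equality characterized locally by (2a), (2b), (2c), followed by the asymptotics $\sqrt{R_\fre(z)}\sim\pm z^{l+1}$ (resp.\ $z^k\sqrt{R_\frf(z)}\sim\pm\lambda z^{k+l}$) to convert (2c) into (3c)--(3d). Your branch-point bookkeeping in the coordinate $w=(z-z_0)^{1/2}$ (orders $2k$ versus $2k-1$ according to whether $p(z_0)=0$) is just a more explicit rendering of the same computation the paper performs.
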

\begin{proof}
$\rl$
First of all, note that $\deg m \ge \Deg a$ always holds. To see that, let us compare the number of zeros of $a$ and finite poles of $m$. Indeed, if $z_0\in\bbC\setminus\cup_{j=1}^{l+1}\{\alpha_j,\beta_j\}$ is a zero of $a$ of order $n$, then at least one of $(z_0)_+$ or $(z_0)_-$ will be a pole of $m$ of order $n$ since $\sqrt{R_\fre(z_0)}\ne 0$. If $a$ has a zero at an endpoint $z_0\in\cup_{j=1}^{l+1}\{\alpha_j,\beta_j\}$ of order $1$, then $m$ has a pole of order $1$ at $(z_0)_+=(z_0)_-$ if $p(z_0) = 0$, or $m$ has a pole of order $2$ if $p(z_0)\ne 0$. Moreover, if $a$ has a zero of order $n\ge 2$ at an endpoint, then $m$ blows up at least as $(z-z_0)^{n-1/2}$, i.e., has a pole of order $\ge 2n-1 > n$.

$(1)\Rightarrow(2)$ From the above considerations, in order for the equality in $\deg m \ge \Deg a$ to hold, we must have (2a) and (2b). Moreover, a pole of $m$ at $\infty_\pm$ would also break the equality, so (2c) must hold too.

$(2)\Rightarrow(3)$ Note that $\sqrt{R_\fre(z)} \sim \pm z^{l+1}$ at $z\to\infty_\pm$. This means that (2c) requires $\Deg a \ge l+1$, and then $\Deg p \le \Deg a$.

$(3)\Rightarrow(1)$ Conditions (3c) and (3d) imply that $\infty_+$ and $\infty_-$ are not poles of $m$. Therefore all the poles of $m$ come from the zeros of $a$. The arguments in the beginning of the proof show that (3a) and (3b) guarantee that the total number of zeros of $a$ and poles of $m$ coincide when counted  with their multiplicities.

\smallskip{\normalfont $\uc$}
The proof for $F$ follows along the identical lines. Note that we do not need to worry about points $0_\pm$ since we are assuming in advance that $a(0)\ne 0$, and therefore $0_\pm$ cannot be poles of $F$.
\end{proof}

\begin{lemma}\label{lemInduct}
\hspace*{\fill}\\
\indent {\normalfont $\rl$}
Suppose two functions $m$ and $m^{(1)}$ meromorphic on $\calS_\fre$ satisfy~\eqref{stripping}
for some constants $a_1\ne0, b_1\in\bbC$.
\begin{itemize}
\item[(i)]
If $m$ is of the form~\eqref{nonminimal1} and satisfies $(2a)$ and $(2b)$ of Lemma~\ref{lemEqualDegree}, then the same is true of $m^{(1)}$.
\item[(ii)]
If $m^{(1)}$ is of the form~\eqref{nonminimal1} and satisfies $(2a)$ and $(2b)$ of Lemma~\ref{lemEqualDegree}, then the same is true of $m$.
\end{itemize}

\smallskip{\normalfont $\uc$}
Suppose two functions $F$ and $F^{(1)}$ meromorphic on $\calS_\frf$ satisfy~\eqref{strCar} for some $\alpha_0\in\bbD$.
\begin{itemize}
\item[(i)]
If $F$ is of the form~\eqref{nonminimalF} with $a(0)\ne0$ and satisfies $(2a)$ and $(2b)$ of Lemma~\ref{lemEqualDegree}, then $F^{(1)}$ is of the form~\eqref{nonminimalF} with a possibly different $k$ and satisfies $(2a)$ and $(2b)$ of Lemma~\ref{lemEqualDegree}.
\item[(ii)]
If $F^{(1)}$ is of the form~\eqref{nonminimalF} with $a^{(1)}(0)\ne0$  and satisfies $(2a)$ and $(2b)$ of Lemma~\ref{lemEqualDegree}, then $F$ is of the form~\eqref{nonminimalF} with a possibly different $k$ and satisfies $(2a)$ and $(2b)$ of Lemma~\ref{lemEqualDegree}.
\end{itemize}
\end{lemma}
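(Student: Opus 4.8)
The plan is to treat both parts and both cases in parallel, exploiting that the stripping relations \eqref{stripping} and \eqref{strCar} express one function as an invertible Möbius transformation of the other, with coefficients that are polynomials in the base variable $z$ and hence $\tau$-invariant. In each case the transformed function is a rational expression in a meromorphic function and in $z$, so it is automatically meromorphic on $\calS_\fre$ (resp.\ $\calS_\frf$) and thus, by \eqref{pqr}, of the shape $(P+Q\sqrt{R})/A$. The real work is to show (A) that after reduction the coefficient $Q$ of $\sqrt{R}$ collapses to a constant (for $\rl$) or to a monomial $z^{k'}$ with $A(0)\neq0$ (for $\uc$), so that we land back in \eqref{nonminimal1} (resp.\ \eqref{nonminimalF}); and (B) that conditions $(2a)$ and $(2b)$ of Lemma~\ref{lemEqualDegree} are inherited. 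Since the Möbius relation is invertible, parts (i) and (ii) are symmetric, so I would prove one direction and obtain the other by interchanging the roles of the two functions.

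The key to (A) is a divisibility lemma: if $g=(p+\sqrt{R_\fre})/a$ satisfies $(2a)$ and $(2b)$, then $a\mid(p^2-R_\fre)$ as polynomials (for $\uc$, $a\mid(p^2-z^{2k}R_\frf)$). I would prove this by comparing orders at each zero $z_0$ of $a$. At a non-branch zero of order $n$ the two local factors $p\pm\sqrt{R_\fre}$ differ by $2\sqrt{R_\fre}\neq0$, so at most one of them vanishes at $z_0$; condition $(2a)$ forbids both sheets from being poles, which forces the vanishing factor to have order $\ge n$, whence $\operatorname{ord}_{z_0}(p^2-R_\fre)\ge n$. At a branch zero, $(2b)$ forces $a$ to have a simple zero and $p(z_0)=0$, so $p^2-R_\fre$ vanishes to order exactly $1$. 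Thus $\operatorname{ord}_{z_0}(p^2-R_\fre)\ge\operatorname{ord}_{z_0}(a)$ at every zero, giving divisibility. Feeding this into the explicit transformation: for $\rl$ one computes directly from \eqref{stripping} that $m^{(1)}=\bigl[(b_1-z)(p^2-R_\fre)-pa+a\sqrt{R_\fre}\bigr]\big/\bigl[a_1^2(p^2-R_\fre)\bigr]$, and writing $p^2-R_\fre=as$ cancels $a$ to leave the coefficient $1$ in front of $\sqrt{R_\fre}$. For $\uc$ the same substitution, after writing \eqref{strCar} as a Möbius transformation $F^{(1)}=(\gamma F+\delta)/(\epsilon F+\zeta)$ with $\gamma,\delta,\epsilon,\zeta$ linear in $z$, produces $a(\gamma\zeta-\epsilon\delta)z^k$ as the coefficient of $\sqrt{R_\frf}$ in the rationalized numerator; the one genuinely necessary computation is the determinant identity $\gamma\zeta-\epsilon\delta=4\rho_0^2z$, a monomial, so that after cancelling $a\mid(p^2-z^{2k}R_\frf)$ this coefficient is a constant multiple of $z^{k+1}$. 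A short separate check at $z=0$ (using $a(0)\neq0$) then lets me extract the correct power of $z$ to arrange $A(0)\neq0$, yielding \eqref{nonminimalF} with a new $k'$.

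For (B) I would argue uniformly through the $\sharp$-symmetry. Applying $\tau$ to the stripping relation—legitimate since its coefficients depend only on $z$—shows that the transformed function has a pole at a point $w_0$ precisely when the original $g$ takes a prescribed $\tau$-invariant value $v(w_0)$ there ($v\equiv0$ for $m^{(1)}$ in terms of $m$; $v=(b_1-z)/a_1^2$, resp.\ its CMV analogue, in the reverse direction). If both the transformed function and its $\sharp$ were to have poles at a non-branch $w_0$, then $g(w_0)=g^\sharp(w_0)=v(w_0)$, forcing $g-g^\sharp=2\sqrt{R_\fre}/a$ to vanish there—impossible at a non-branch point where $\sqrt{R_\fre}\neq0$; this gives $(2a)$. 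For $(2b)$, at a branch point $z_0$ the numerator of $g-v$ is $p-va+\sqrt{R_\fre}$, whose analytic part $p-va$ carries only integer powers of $(z-z_0)$ while $\sqrt{R_\fre}$ supplies the half-integer power $(z-z_0)^{1/2}$ with nonzero coefficient; hence any zero there is simple, and the transformed function can have at most a simple pole at $z_0$.

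The main obstacle is step (A)—showing that we stay inside the restricted form \eqref{nonminimal1}/\eqref{nonminimalF} rather than landing in the general \eqref{pqr}. Everything hinges on the divisibility lemma, which is exactly where the hypotheses $(2a)$ and $(2b)$ are consumed, and (for $\uc$) on the determinant identity $\gamma\zeta-\epsilon\delta=4\rho_0^2z$ together with the bookkeeping of powers of $z$ at the origin needed to keep $a(0)\neq0$.
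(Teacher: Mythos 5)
Your proposal is correct and follows essentially the same route as the paper: the divisibility lemma $a \mid (p^2 - R_\fre)$ (resp.\ $a\mid(p^2-z^{2k}R_\frf)$) extracted from $(2a)$--$(2b)$ is exactly the paper's key step, inheritance of $(2a)$ via the non-vanishing of $g-g^\sharp = 2\sqrt{R}/a$ off the branch points is the paper's argument verbatim, and your treatment of $(2b)$ via the uncancellable half-integer term $(z-z_0)^{1/2}$ is the same idea the paper implements (slightly differently packaged, via $m-m^\sharp$ order-counting in case (ii) and a conformality argument in the \uc{} case); your explicit formula for $m^{(1)}$ and the determinant identity $\gamma\zeta-\epsilon\delta=4\rho_0^2 z$ merely spell out what the paper calls ``tedious but straightforward computations.'' One small caution: your claim to always arrange $A(0)\neq 0$ for $F^{(1)}$ in \uc{}(i) overreaches --- the paper's remark after the lemma explicitly declines to assert $a^{(1)}(0)\neq 0$ at this stage (it only becomes automatic later when $F^{(1)}$ is known to be a Carath\'eodory function) --- but nothing else in your argument depends on it.
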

\begin{remark}
In \uc{}(i) we do not claim that necessarily $a^{(1)}(0)\ne 0$. In \uc{}(ii) we do not claim that necessarily $a(0)\ne 0$. This will be automatic later on when we know that $F$ and $F^{(1)}$ are Carath\'{e}odory functions.
\end{remark}
\begin{proof}
\rl{}
First of all, note that for any function of the form~\eqref{nonminimal1} that satisfies (2a) and (2b), $\frac{p(z)^2-R_\fre(z)}{a(z)}$ is a polynomial. Indeed, let $z_0$ be a zero of $a(z)$ of order $n$. If $z_0\in\bbC\setminus\cup_{j=1}^{l+1}\{\alpha_j,\beta_j\}$, then (2a) implies that $p(z) + \sqrt{R_\fre(z)}$ 
must have a zero of order $\ge n$ at $(z_0)_+$ or at $(z_0)_-$. This implies that the polynomial $p(z)^2-R_\fre(z)$ has a zero of order $\ge n$ at $z_0$. If $z_0\in \cup_{j=1}^{l+1}\{\alpha_j,\beta_j\}$ then (2b) and the arguments in the proof of Lemma~\ref{lemEqualDegree} imply that $n=1$ and $p(z_0)=0$ which shows that $p(z)^2-R_\fre(z)$ has a zero at $z_0$. This proves that $p(z)^2-R_\fre(z)$ is divisible by $a(z)$.

(i) Plugging~\eqref{nonminimal1} into~\eqref{stripping}, simplifying, and using the divisibility of $p(z)^2-R_\fre(z)$ by $a(z)$, one can see that $m^{(1)}$ is indeed of the form~\eqref{nonminimal1}. Suppose that $m^{(1)}$ has a pole at $z_0$ and at $\tau(z_0)$ for some $z_0\in\pi^{-1}(\bbC)$. By~\eqref{stripping} this implies $m(z_0)=m(\tau(z_0))$. But $m$ is of the form~\eqref{nonminimal1} which implies $R_\fre(z_0) = 0$, i.e., $z_0\in\pi^{-1}(\cup_{j=1}^{l+1} \{\alpha_j,\beta_j\})$ which establishes (2a) for $m^{(1)}$. Now suppose $z_0\in\pi^{-1}(\cup_{j=1}^{l+1} \{\alpha_j,\beta_j\})$ is a pole of order $n$ for $m^{(1)}$. Then $m$ has a zero of order $n$ at $z_0$ by~\eqref{stripping}. By~\eqref{nonminimal1} $m$ has a zero at $z_0$ if and only if $a(z_0)\ne 0$ and $p(z_0)=0$, but then $m\sim \operatorname{const} \times (z-z_0)^{1/2}$ as $z\to z_0$, so $n=1$.

(ii) Plugging
\begin{equation}\label{mStripped}
m^{(1)} = \frac{p^{(1)}(z)+ \sqrt{R_\fre(z)}}{a^{(1)}(z)}
\end{equation}
into~\eqref{stripping} and using that $\frac{p^{(1)}(z)^2-R_\fre(z)}{a^{(1)}(z)}$ is a polynomial, one can see that $m$ is indeed of the form~\eqref{nonminimal1}. Suppose that $m$ has a pole at $z_0$ and at $\tau(z_0)$ for some $z_0\in\pi^{-1}(\bbC)$. By~\eqref{stripping} this implies $m^{(1)}(z_0)=m^{(1)}(\tau(z_0))$. By~\eqref{mStripped} this implies $R_\fre(z_0) = 0$, i.e., $z_0\in\pi^{-1}(\cup_{j=1}^{l+1} \{\alpha_j,\beta_j\})$ which establishes (2a) for $m$.

Finally suppose $z_0\in\pi^{-1}(\cup_{j=1}^{l+1} \{\alpha_j,\beta_j\})$ is a pole of order $n$ for $m$.
Let $m(z)=\sum_{j=-n}^\infty c_j (z-z_0)^{j/2}$ be the Taylor expansion of $m$ around $z_0$. Then $m^\sharp(z)=\sum_{j=-n}^\infty (-1)^j c_j (z-z_0)^{j/2}$. 
Thus $m^\sharp$ has a pole of order $n$ at $z_0$ as well, and the order of the pole of $m-m^\sharp$ at $z_0$ is either $n$ or $n-1$. Note that~\eqref{stripping} implies
$$
a_1^2 \left(m^{(1)}(z) -m^{(1)}{}^\sharp(z)\right)= \frac{m(z)-m^\sharp(z)}{m(z) m^\sharp(z)}.
$$
Therefore $m^{(1)}-m^{(1)}{}^\sharp$ has a zero at $z_0$ of order $\ge 2n-n= n$. But as argued in (i), a function of the form~\eqref{mStripped} cannot have a zero at $\pi^{-1}(\cup_{j=1}^{l+1} \{\alpha_j,\beta_j\})$ of order higher than $1$. This shows $n=1$.

\smallskip{\normalfont $\uc$}
Using similar arguments as above, for any function of the form~\eqref{nonminimalF} with $a(0)\ne 0$ that satisfies (2a) and (2b), $\frac{p(z)^2-z^{2k} R_\frf(z)}{a(z)}$ is a polynomial.

(i) Plugging~\eqref{nonminimalF} into~\eqref{strCar} and using the divisibility of $p(z)^2-z^{2k} R_\frf(z)$ by $a(z)$, tedious but straightforward computations show that $F^{(1)}$ is indeed of the form~\eqref{nonminimalF} with a possibly different $k$. Suppose that $F^{(1)}$ has a pole at $z_0$ and at $\tau(z_0)$ for some $z_0\in\pi^{-1}(\bbC\setminus\{0\})$. Rewrite~\eqref{strCar} as
\begin{equation}\label{strCar2}
 \frac{F(z)+1}{F(z)-1} = \frac{1}{z} \, \frac{(F^{(1)}(z)+1) + \bar\alpha_0(F^{(1)}(z)-1)}{\alpha_0(F^{(1)}(z)+1) + (F^{(1)}(z)-1)}.
\end{equation}
This shows that $F^{(1)}(z_0) = F^{(1)}(\tau(z_0)) = \infty$ implies $F(z_0)=F(\tau(z_0))$. $F(z_0)=F(\tau(z_0))=\infty$ is impossible since we are assuming $F$ satisfies (2a), and otherwise $F(z_0)=F(\tau(z_0))$ implies $z_0^k \sqrt{R_\frf(z_0)} = 0$, i.e., $z_0\in\pi^{-1}(\{0\}\cup_{j=1}^{2l} \{e^{i\theta_j}\})$. This proves (2a) for $F^{(1)}$.

Now suppose $z_0\in\pi^{-1}(\cup_{j=1}^{2l} \{e^{i\theta_j}\})$ is a pole of order $n\ge 2$ for $F^{(1)}$. Since any M\"{o}bius  transformation is conformal, the right-hand side of~\eqref{strCar2} as $z\to z_0$ takes the form
\begin{equation}\label{indEq1}
\frac{1}{z} \left(c_0 + c_n (z-z_0)^{n/2} + o((z-z_0)^{n/2}\right)
\end{equation}
with $c_n\ne 0$. Since $\tfrac{1}{z} = \tfrac{1}{z_0} + O((z-z_0))$, ~\eqref{indEq1} becomes $\tfrac{c_0}{z_0} + O((z-z_0))$. Notice the absence of $(z-z_0)^{1/2}$ term! Then using~\eqref{strCar2} and  conformality of a M\"{o}bius transformation again, we obtain that $F$ has a pole of order $\ge 2$ at $z_0$ or $F(z)-F(z_0)$ has a zero at $z_0$ of order $\ge2$. The former case is impossible since $F$ satisfies (2b), while the latter case is also impossible since we get $a(z_0)\ne 0$, and the $(z-z_0)^{1/2}$ term in the Taylor's series $z^k \sqrt{R_\frf(z)} = c_1 (z-z_0)^{1/2} + O((z-z_0))$  ($c_1\ne 0$) cannot be canceled by $p(z) = p(z_0) + O((z-z_0))$. We got a contradiction with $n\ge 2$ and therefore proved (2b) for $F^{(1)}$.

(ii) Plugging
\begin{equation}\label{FStripped}
F^{(1)} = \frac{p^{(1)}(z)+ z^k \sqrt{R_\frf(z)}}{a^{(1)}(z)}
\end{equation}
into~\eqref{strCar2} and using that $\frac{p^{(1)}(z)^2-z^{2k} R_\frf(z)}{a^{(1)}(z)}$ is a polynomial, one obtains that $F$ is indeed of the form~\eqref{nonminimalF}. Suppose that $F$ has a pole at $z_0$ and at $\tau(z_0)$ for some $z_0\in\pi^{-1}(\bbC)$. By~\eqref{strCar} this implies $F^{(1)}(z_0)=F^{(1)}(\tau(z_0))$ (note that the ratio on the right-hand side is never $\tfrac{0}{0}$ since $|\alpha_0|\ne 1$). $z_0$ and $\tau(z_0)$ cannot be poles of $F^{(1)}$ by (2a), and then $F^{(1)}(z_0)=F^{(1)}(\tau(z_0))$ with~\eqref{FStripped} implies $z_0^k R_\frf(z_0) = 0$, i.e., $z_0\in\pi^{-1}(\{0\}\cup_{j=1}^{2l} \{e^{i\theta_j}\})$ which establishes (2a) for $F$.

Finally suppose $z_0\in\pi^{-1}(\cup_{j=1}^{2l} \{e^{i\theta_j}\})$ is a pole of order $n\ge2$ for $F$. Then reusing the conformality arguments in (i) we can see that $\frac{F(z)+1}{F(z)-1} = 1 + O((z-z_0))$, $z\to z_0$. Then $z\frac{F(z)+1}{F(z)-1} = z_0 + O((z-z_0))$ (the stress is on the absence of $(z-z_0)^{1/2}$ term), and then by~\eqref{strCar} we get that $F^{(1)}$ must has a pole of order $\ge 2$ at $z_0$ or $F^{(1)}(z)-F^{(1)}(z_0)$ has a zero of order $2$. The first case is impossible by (2b), and the second is impossible due to the presence of $\sqrt{R_\frf(z)} = c_1 (z-z_0)^{1/2} + O((z-z_0))$, $c_1\ne 0$, just like in the proof of (i).
\end{proof}

For a future reference, we note that if the $m$-function of $\mu$ is of the form~\eqref{nonminimal1}, then using~\eqref{symmM} we can rewrite~\eqref{herg1} as
\begin{equation}\label{densityM}
\frac{d\mu}{dx} = \tfrac{1}{2\pi i} \lim_{\veps\downarrow 0} m((x+i\veps)_+)-m((x-i\veps)_+) = \frac{m(x_+)-m(x_-)}{2\pi i}  = \frac{\sqrt{R_\fre(x)}}{\pi i a(x)}
\end{equation}
for $x\in\fre$.
Similarly, if the Carath\'{e}odory function $F$ of $\mu$ is of the form~\eqref{nonminimalF}, then using~\eqref{symmF}, we can rewrite~\eqref{cara1} as
\begin{equation}\label{densityF}
\frac{d\mu}{d\theta} = \tfrac{1}{4\pi} \lim_{r\uparrow 1} F((re^{i\theta})_+)-F((r^{-1} e^{i\theta})_+) = \frac{F((e^{i\theta})_+)-F((e^{i\theta})_-)}{4\pi}  = \frac{e^{ik\theta } \sqrt{R_\frf(e^{i\theta})}}{2\pi a(e^{i\theta})}.
\end{equation}
for $\theta\in\frf$.

We are now ready to prove the classification of the $m$-functions (Carath\'{e}odory functions) of finite range perturbations.

\begin{theorem}\label{thmM}
\hspace*{\fill}\\
\indent {\normalfont $\rl$}
Let $\calJ[a_n,b_n]_{n=1}^\infty$ be a Jacobi operator and $m$ its $m$-function~\eqref{m}. The following are equivalent:
\begin{itemize}
\item[$(\calT_\fre^{s\ge 1})$] $\calJ$ is in $\calT_\fre^{s\ge 1}$.
\item[$(\calM_\fre^{s\ge1})$] The $m$-function of $\calJ$ is of the form
\begin{equation}\label{simpleM}
m(z)=\frac{p(z)+\sqrt{R_\fre(z)}}{a(z)},
\end{equation}
where $p,a$ are polynomials and
\begin{equation}\label{degsEqual}
\deg m = \Deg a.
\end{equation}
\end{itemize}
Moreover, for any $s\ge1$,
\begin{equation}\label{singularitiesNumber}
\calJ\in\calT_\fre^{[s]} \mbox{ if and only if } \deg m = \Deg a = l + s.
\end{equation}

\smallskip{\normalfont $\uc$}
Let $\calC[\alpha_n]_{n=0}^\infty$ be a CMV operator and $F$ its Carath\'{e}odory function~\eqref{F}. The following are equivalent:
\begin{itemize}
\item[$(\calT_\frf^{s\ge 0})$] $\calC$ is in $\calT_\frf^{s\ge 0}$.
\item[$(\calM_\frf^{s\ge 0})$] The Carath\'{e}odory function  of $\calC$ is of the form
\begin{equation}\label{simpleF}
F(z)=\frac{p(z)+z^s \sqrt{R_\frf(z)}}{a(z)},
\end{equation}
for some $s\ge 0$, where $p,a$ are polynomials with $a(0)\ne 0$ and
\begin{equation}\label{degsEqualF}
\deg F = \Deg a = l+2s.
\end{equation}
\end{itemize}
Moreover, for any $s\ge0$,
\begin{equation}\label{singularitiesNumberF}
\calC\in\calT_\frf^{[s]} \mbox{ if and only if } \deg F = \Deg a = l + 2s.
\end{equation}
\end{theorem}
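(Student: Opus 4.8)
\emph{Strategy.} The plan is to run an induction along the coefficient‑stripping chain $\calJ=\calJ^{(0)},\calJ^{(1)},\dots$, using the two workhorses already in hand. Lemma~\ref{lemEqualDegree} lets me trade the degree identity $\deg m=\Deg a$ for the analytic pole conditions (2a)--(2c), while Lemma~\ref{lemInduct} transports the form~\eqref{nonminimal1} together with (2a),(2b) across a single application of~\eqref{stripping}, in either direction. The base of the induction is the torus itself: by Definition~\ref{defMinimal} and Lemma~\ref{lemMinimal}, $\calJ\in\calT_\fre$ exactly when $m$ is minimal, i.e.\ of the form~\eqref{nonminimal1} with $\deg m=l+1$, its $l$ finite poles simple and one per gap (so (2a),(2b) hold) and with a pole at $\infty_-$ (so (2c) \emph{fails}). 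I will prove the sharp statement~\eqref{singularitiesNumber} by induction on $s$; the plain equivalence $(\calT_\fre^{s\ge1})\Leftrightarrow(\calM_\fre^{s\ge1})$ then follows by taking the union over $s\ge1$, using $\Deg a\ge l+1$ from Lemma~\ref{lemEqualDegree}(3c) to see that~\eqref{simpleM} with $\deg m=\Deg a$ forces $s:=\Deg a-l\ge1$.

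\emph{The direction $(\calT_\fre^{s\ge1})\Rightarrow(\calM_\fre^{s\ge1})$.} I start from $\calJ\in\calT_\fre^{[s]}$, so $\calJ^{(s)}\in\calT_\fre$ and $m^{(s)}$ is minimal, hence of the form~\eqref{nonminimal1} and satisfying (2a),(2b). Iterating Lemma~\ref{lemInduct}(ii) $s$ times pushes these properties back to $m=m^{(0)}$, giving the form~\eqref{simpleM} with (2a),(2b). It remains to verify (2c). That $\infty_+$ is not a pole is immediate, since $m(z)=\int d\mu/(x-z)\to0$ as $z\to\infty_+$. The point $\infty_-$ is delicate: although $m^{(s)}$ carries a pole there, the $-z$ term in~\eqref{stripping} absorbs it, so I must check that the leading $\infty_-$‑behaviour of $m^{(1)}$ combines with $-z$ to leave $m$ finite at $\infty_-$. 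Granting this, Lemma~\ref{lemEqualDegree} yields $\deg m=\Deg a$.

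\emph{The converse and the degree count.} Given~\eqref{simpleM} with $\deg m=\Deg a$, conditions (2a)--(2c) hold and $\Deg a\ge l+1$. Stripping forward via Lemma~\ref{lemInduct}(i) keeps every $m^{(j)}$ of the form~\eqref{nonminimal1} with (2a),(2b); I track $\deg m^{(j)}$ until an $\infty_-$ pole reappears and the degree drops to $l+1$, at which stage $m^{(j)}$ is minimal and $\calJ^{(j)}\in\calT_\fre$, exhibiting $\calJ$ as a finite range perturbation. For the exact count I read~\eqref{stripping} as $m=\bigl(b_1-z-a_1^2m^{(1)}\bigr)^{-1}$, so $\deg m$ equals the number of poles of $g:=b_1-z-a_1^2m^{(1)}$. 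The function $g$ has a simple pole at $\infty_+$ and a simple pole at $\infty_-$ from $-z$, plus all finite poles of $m^{(1)}$; the two leading $\infty_-$ contributions can cancel only when $m^{(1)}$ itself has an $\infty_-$ pole, i.e.\ only on the terminal strip into $\calT_\fre$, and then precisely when $a_1$ takes its torus value. This alternative is exactly the distinction between the odd class $\calT_\fre^{[2t-1]}$ and the even class $\calT_\fre^{[2t]}$ of Definition~\ref{defFiniteRange}; it produces the $+1$ versus $+2$ jump in $\deg m$ under one strip, so that inductively $\deg m=\Deg a=l+s$.

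\emph{The \uc{} case and the main obstacle.} The unitary case runs in parallel, now with the stripping relation~\eqref{strCar} (rewritten as~\eqref{strCar2}), the form~\eqref{nonminimalF}, and the \uc{} half of Lemma~\ref{lemInduct}. Because~\eqref{strCar} is a M\"{o}bius transformation in $F$ premultiplied by a factor $z$, a single strip raises $\Deg a$ by $2$ and the power $z^s$ in~\eqref{simpleF} by $1$; crucially there is no analogue of the $\infty_-$‑pole condition Definition~\ref{defMinimal}(iii), so minimal Carath\'{e}odory functions already satisfy (2c), the class $\calT_\frf^{[0]}=\calT_\frf$ is admitted, and $\deg F=l+2s$ holds from $s=0$ onward. \textbf{The main obstacle} is the bookkeeping at $\infty_-$ in the Jacobi case: one must simultaneously see that the minimal $\infty_-$ pole is exactly what forces the Jacobi equivalence to begin at $s\ge1$ (in contrast to \uc), match the cancellation‑at‑$\infty_-$ condition to the parity of the perturbation index so as to recover the precise increment in $\deg m$, and — in the $(\calT_\fre^{s\ge1})\Rightarrow(\calM_\fre^{s\ge1})$ direction — confirm that this pole genuinely disappears so that (2c) holds for the unstripped operator.
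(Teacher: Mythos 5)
Your overall route is the paper's route: strip coefficients one at a time, use Lemma~\ref{lemEqualDegree} to convert $\deg m=\Deg a$ into the pole conditions (2a)--(2c), transport (2a),(2b) along the chain with Lemma~\ref{lemInduct}, and do all the real work at $\infty_\pm$ (respectively $0_\pm,\infty_\pm$). But three load-bearing steps are left as ``granting this'' or are not mentioned at all, and each of them requires an actual idea rather than bookkeeping.

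First, in $(\calT_\fre^{s\ge1})\Rightarrow(\calM_\fre^{s\ge1})$ with $s=1$: you must show $\infty_-$ is \emph{neither} a pole nor a zero of $m$. Writing $m^{(1)}(z)=k_1z+k_0+O(1/z)$ at $\infty_-$, the zero at $\infty_-$ is killed exactly when $a_1^2=-1/k_1$ (the unique torus value of $a_1$), and then $m(\infty_-)=1/(b_1+k_0/k_1)$, which is a pole exactly when $b_1=-k_0/k_1$ --- the unique torus value of $b_1$. The hypothesis $\calJ\notin\calT_\fre$ is what rules this out. You name this as the main obstacle but never supply the identification of the cancellation conditions with the torus values of $a_1$ and $b_1$; without it, (2c) for $s=1$ is unproved.

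Second, in the converse direction with $\deg m=\Deg a=l+2$: to conclude $m^{(1)}$ is minimal you need $m$ to have exactly $l$ finite zeros, equivalently that the zero of $m$ at $\infty_-$ is \emph{simple}. From~\eqref{continuation} alone one only gets $m(z)\sim(c-1)/z$ at $\infty_-$ for some constant $c$, and $c=1$ is not excluded by anything you say. The paper excludes it by a separate argument: the density formula~\eqref{densityM} forces $a$ to alternate sign across the bands, hence to have an odd number of zeros in each gap, and a winding-number argument on each circle $\pi^{-1}([\beta_j,\alpha_{j+1}])$ then produces at least one zero of $m$ per gap; the count $l+(\text{zero at }\infty_+)+(\text{zero at }\infty_-)\le l+2$ forces simplicity. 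This idea is absent from your proposal.

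Third, in \uc{}: you assert that one strip raises $\Deg a$ by $2$, but the danger is precisely a cancellation at $0_-$ (the analogue of the $\infty_-$ cancellation in \rl{}). The paper rules it out by introducing the operator $\calC^\circ=\calC[\gamma^\circ,\alpha_1,\dots]$ with the unique torus value $\gamma^\circ$, computing $F^{(1)}(0_-)=\frac{\bar\gamma^\circ-1}{\bar\gamma^\circ+1}$, and using $\alpha_0\ne\gamma^\circ$ to get $g(0_-)\ne0$ in~\eqref{value6}, whence $\deg F=\deg F^{(1)}+2$; the same device (via~\eqref{ffsharp}) is what shows the prefactor is exactly $z^s$. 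Saying ``there is no analogue of Definition~\ref{defMinimal}(iii)'' does not substitute for this: the degree increment still hinges on a non-cancellation statement tied to the torus value of $\alpha_0$.
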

\begin{remarks}
1. Note that $s\ne 0$ in the condition $(\calT_\fre^{s\ge 1})$. Indeed, for the Jacobi operators in the isospectral torus we have in fact $\deg m = l+1$, $\Deg a=l$.

2. See Lemma~\ref{lemEqualDegree} for an intuition on what $\deg m = \Deg a$ (respectively, $\deg F = \Deg a$) for such functions means.

3. As we show later, all such functions $m$ and $F$ are uniquely determined by the set of their poles. In Section~\ref{sResonance} we show the necessary and sufficient condition for any set of points on $\calS_\fre$ to be the set of poles of such a function. Given such a configuration, we present an explicit form of $m$ in Section~\ref{sInterpolation}.
\end{remarks}
\begin{proof}
\rl{}

$(\calT_\fre^{s\ge 1})\Rightarrow(\calM_\fre^{s\ge 1})$

The $m$-function of any operator in $\calT_\fre$ is a minimal Herglotz function, see Def.~\ref{defMinimal}. In particular it is of the form~\eqref{simpleM} (see Remark 4 after Def.~\ref{defMinimal}) and satisfies (2a) and (2b) of Lemma~\ref{lemEqualDegree} (follows from Lemma~\ref{lemMinimal}). By Lemma~\ref{lemInduct} the same is true of any operator in $\calT_\fre^{[s]}$ for all $s$. So we just need to establish (2c) and~\eqref{singularitiesNumber}.


Suppose $\calJ\in\calT_\fre^{[s]}$ with $s=1$ or $s=2$, i.e., $\calJ^{(1)}\in \calT_\fre$. Let $m^{(1)}$ be the $m$-function of $\calJ^{(1)}$. As a minimal Herglotz function, $m^{(1)}(z)$ has exactly one pole per gap $\pi^{-1}([\alpha_j,\beta_j])$ and a first order pole at $\infty_-$  (see Lemma~\ref{lemMinimal} and the remark following it). Let $m^{(1)}(z)\sim k_{1}z+k_0+O(\tfrac1z)$, $k_1\ne0$, at $\infty_-$. 

Let us rewrite~\eqref{stripping} as
\begin{equation}\label{stripping2}
m(z)=\frac{1}{b_1-z-a_1^2 m^{(1)}(z)}
\end{equation}
and count the solutions of the equation $m(z)=0$: exactly once per each gap (where $m^{(1)}$ has a pole), a simple zero at $\infty_+$ (since $m$ is an $m$-function), and possibly a simple zero at $\infty_-$. Note that $m(z)=0$ at $\infty_-$ if and only if $1+a_1^2 k_1 \ne 0$. But we know that if $\calJ\in\calT_\fre$ then $m(z)$ has a pole at $\infty_-$ as a minimal Herglotz function. Therefore $a_1=\sqrt{-1/k_1}$ is precisely the unique $a_1$ that makes the sequence $a_1,a_2,a_3\ldots$ almost periodic (indeed, we know there exists a unique such $a_1$, see Subsection~\ref{ssPeriodic}).

If $s=2$, then $a_1,a_2,a_3\ldots$ is not almost periodic, so $a_1 \ne \sqrt{-1/k_1}$, and therefore we just  showed that $m$ has exactly $l+2$ zeros: once per each gap, one at $\infty_+$, and one at $\infty_-$. In particular (2c) of Lemma~\ref{lemEqualDegree} for $m$ holds, and $\deg m = l+2$. Therefore $\deg m = \Deg a= l+2$ by Lemma~\ref{lemEqualDegree}.

If $s=1$, then $a_1 =\sqrt{-1/k_1}$, and so $\infty_-$ is not a zero of $m$. By the zero counting above, we have precisely $l+1$ zeros (one per each gap and $\infty_+$), i.e., $\deg m = l+1$. By~\eqref{stripping2},  $\infty_-$ is a pole of $m$ if and only if $a_1^2 k_0 -b_1=0$. Minimal Herglotz functions have a pole at $\infty_-$, which means that $b_1 =- k_0/k_1$ is exactly the condition for $m$ to be minimal (put it another way, the unique value of $b_1$ to make $b_1,b_2,b_3,\ldots$ to be almost periodic, which again, we know happens for a unique choice of $b_1$). But $s=1$, so $\calJ\notin \calT_\fre$. Thus $b_1\ne - k_0/k_1$, and $\infty_-$ is not a pole of $m$. This proves that $m$ satisfies (2c) of Lemma~\ref{lemEqualDegree}.

Now $s\ge 3$ follows easily by induction. Note that $\infty_-$ was not a pole of $m$ in either of the cases $s=1$ or $s=2$ above. Therefore by~\eqref{stripping}, $\infty_-$ is always a zero when $s\ge3$, so (2c) of Lemma~\ref{lemEqualDegree} applies. Using~\eqref{stripping} again we obtain that $m$ has zeros at $\infty_+$, at $\infty_-$, and at every pole of $m^{(1)}$. Therefore $\deg m = \deg m^{(1)}+2$.

Note that we proved the ``moreover'' part of the theorem along the way too.

\medskip
$(\calM_\fre^{s\ge 1})\Rightarrow  (\calT_\fre^{s\ge1})$
Suppose $m$ is of the form~\eqref{simpleM} and satisfies~\eqref{degsEqual}. Since it is the $m$-function of some $\calJ$, we can consider $\calJ^{(k)}$, $k\ge1$, and the corresponding $m$-functions $m^{(k)}$. By Lemmas~\ref{lemEqualDegree} and~\ref{lemInduct}, each $m^{(k)}$ is also of the form~\eqref{simpleM} (with $p^{(k)}$ and $a^{(k)}$ instead of $p$ and $a$) and satisfies (2a) and (2b) of Lemma~\ref{lemEqualDegree}.

Note that a function of the form~\eqref{simpleM} has $\deg m \ge l+1$ (see~\cite[Thm 5.12.5]{Rice}).
Let $\deg m =\Deg a  = l+s$ with $s\ge 1$.

Let us carefully check the configuration of zeros of $m$. There is a total of $l+s$ of zeros when counted with the multiplicities, and $\infty_+$ is one of them since $m\sim-\tfrac{1}{z}$, $z\to\infty_+$. We would like to know whether $\infty_-$ is also a zero. Note that
\begin{equation}\label{continuation}
m(z) = m^\sharp(z) + \frac{2\sqrt{R_\fre(z)}}{a(z)}.
\end{equation}
Since $\Deg a = l+s$ and $\Deg R_\fre = 2(l+1)$, this shows that if $s=1$ then $\infty_-$ is neither a zero nor a pole of $m$, and if $s\ge 3$ then $\infty_-$ is a simple zero of $m$. When $s=2$, we also obtain that $\infty_-$ is a zero of $m$, but we have to be more careful in order to justify that it is simple.
Recall~\eqref{densityM}. Since $i\sqrt{R_\fre(x)}$ changes sign from one band to another, and $\tfrac{d\mu(x)}{dx}\ge 0$, we obtain that $a$ is real on $\bbR$ and must have an odd number of zeros on each gap $[\beta_j,\alpha_{j+1}]$, counted with multiplicities. We claim that in each gap $\pi^{-1}([\beta_j,\alpha_{j+1}])$, $m$ must have at least one zero. Indeed, $m$ restricted to $\pi^{-1}([\beta_j,\alpha_{j+1}])$ is a smooth map from $\pi^{-1}([\beta_j,\alpha_{j+1}])$, homeomorphic to a circle, into $\bbR\cup\{\infty\}$, also a circle. Moreover, it attains $\infty$ an odd number of times (in fact, 1 or 3 times in this case), which implies that this $S^1\to S^1$ map has a nonzero winding number. Therefore $m$ must attain $0$ at least once in each gap $\pi^{-1}([\beta_j,\alpha_{j+1}])$. We showed that $m$ must have at least $l$ \textit{finite} zeros. Because of $\deg m = l+2$ and a simple zero at $\infty_+$, we conclude that the zero at $\infty_-$ is also simple.

To sum up, we showed that if $s=1$ then $m(\infty_-)\notin\{0,\infty\}$ and the zeros of $m$ are: a simple zero at $\infty_+$ and $l$ finite zeros (counted with multiplicities); and if $s\ge 2$ then the zeros of $m$ are: a simple zero at $\infty_+$, a simple zero at $\infty_-$, and $l+s-2$ of finite zeros (counted with multiplicities).

Consider now the case $s=1$. Let us count the poles of $m^{(1)}$. By~\eqref{stripping}, these occur at each of the \textit{finite} zeros of $m$ and possibly at $\infty_-$ (note that $\infty_+$ is never a pole since $m^{(1)}$ is an $m$-function). By the above, $m$ has $l$ finite zeros. At $\infty_-$, $m^{(1)}$ has a simple pole by $m(\infty_-)\notin\{0,\infty\}$ and~\eqref{stripping}. Therefore $\deg m^{(1)} = l+1$. One can recognize now that $m^{(1)}$ is a minimal Herglotz function (see Def.~\ref{defMinimal}), in other words, $\calJ^{(1)} \in \calT_\fre$.

If $s=2$, then poles of $m^{(1)}$ occur at each of the $l$ finite zeros of $m$, and possibly at $\infty_-$. Recall that $m(z)\sim -\tfrac{1}{z}$ at $\infty_+$, and therefore by~\eqref{continuation} we have $m(z) =  \tfrac{k}{z} + O(\tfrac{1}{z^2})$, $z\to\infty_-$, where $k\ne -1$ and $k\ne 0$ ($\infty_-$ is a simple zero). But then $\tfrac{1}{m(z)} + z = (\tfrac{1}{k}+1) z + O(1)$ with $\tfrac1k+1 \ne 0$. This and~\eqref{stripping} implies that $m^{(1)}$ has a simple pole at $\infty_-$. Moreover, $\deg m^{(1)} = l+1$ by counting its poles, which again means that $m^{(1)}$ is a minimal Herglotz function, that is, $\calJ^{(1)} \in \calT_\fre$.

Finally, suppose $s\ge 3$. By~\eqref{continuation}, $m(z)=-\tfrac{1}{z}+O(\tfrac{1}{z^2})$ at $\infty_-$, which implies $\tfrac{1}{m(z)} + z = O(1)$. Then~\eqref{stripping} shows that $\infty_-$ is not a pole of $m^{(1)}$. Therefore $m^{(1)}$ satisfies (2a), (2b), (2c) of Lemma~\ref{lemEqualDegree} with $\deg m^{(1)} = \deg m - 2$. Indeed, the poles of $m^{(1)}$ occur only at the finite zeros of $m$, and there are $\deg m-2$ of them. An induction completes the proof.


\smallskip{\normalfont $\uc$}

$(\calT_\frf^{s\ge 0})\Rightarrow(\calM_\frf^{s\ge 0})$

The Carath\'{e}odory function $F$ of any operator in $\calT_\frf$ is a minimal Carath\'{e}odory function, see Def.~\ref{defMinimal}. In particular, it is of the form~\eqref{simpleF} with $s=0$ (see Remark 4 after Def.~\ref{defMinimal}), satisfies $a(0)\ne0$ (since $F(0_+)=1$) and~\eqref{degsEqualF} (follows from Lemma~\ref{lemMinimal}). This shows $s=0$ case. Moreover, by Lemma~\ref{lemInduct}, the Carath\'{e}odory function of any operator in $\calT_\frf^{[s]}$ ($s\ge 1$) is of the form~\eqref{nonminimalF} for some $k$, and satisfies (2a) and (2b) of Lemma~\ref{lemEqualDegree}. So we just need to establish (2c) to be able to apply Lemma~\ref{lemEqualDegree}, and also show that $k=s$ and~\eqref{singularitiesNumberF}.

Suppose $\calC[\alpha_n]_{n=0}^\infty\in\calT_\frf^{[s]}$ with $s=1$, i.e., $\calC^{(1)}\in \calT_\frf$, $\calC\notin \calT_\frf$. Let us denote $\gamma^\circ$ to be the unique complex number that makes $\calC^\circ:=\calC[\gamma^\circ, \alpha_1,\alpha_2,\ldots]\in\calT_\frf$. Since $s=1$, we know that $\alpha_0\ne \gamma^\circ$. Let $F^{(1)}$ and $F^\circ$ be the Carath\'{e}odory functions  of $\calC^{(1)}$ and $\calC^\circ$. Both of them are minimal Carath\'{e}odory functions. Denote
\begin{align}
\label{formF1} F^{(1)}(z) & = \frac{p^{(1)}(z) + \sqrt{R_\frf(z)}}{a^{(1)}(z)}, \\
 F^{\circ}(z) & = \frac{p^{\circ}(z) + \sqrt{R_\frf(z)}}{a^{\circ}(z)}.
\end{align}
We  claim
\begin{align}
\label{value1} F^{(1)}(0_+) &= 1; \quad F^{(1)}(\infty_+) = -1; \\
\label{value2} F^{(1)}(0_-) &= \frac{\bar\gamma^\circ - 1}{\bar\gamma^\circ + 1}; \quad
F^{(1)}(\infty_-) = \frac{1-\gamma^\circ}{1+\gamma^\circ}.
\end{align}
Indeed,~\eqref{value1} follows from the definition~\eqref{F} and~\eqref{symmF}. To show~\eqref{value2},
first notice that
$$
F^{\circ}(z) = F^{\circ}{}^\sharp(z) + \frac{2 \sqrt{ R_\frf(z)}}{a^{\circ}(z)},
$$
which at $z=0_-$ gives us $F^{\circ}(0_-) \ne F^{\circ}(0_+) = 1$ since $a^{\circ}(0)\ne0$ and $R_\frf(0) \ne 0$. Now, since $\calC^\circ{}^{[1]} = \calC^{[1]}$, we can apply~\eqref{strCar2} with $F^\circ$ and $\gamma^\circ$ instead of $F$ and $\alpha_0$, respectively, and take a limit as $z\to 0_-$. The left-hand side is a finite number since $F^{\circ}(0_-)\ne 1$, which means that the numerator of the right-hand side must be zero, producing the first equality in~\eqref{value2}. The second equality in~\eqref{value2} follows from~\eqref{symmF} (by analytic continuation it holds on $\calS_{\frf,-}$ also).

Since we will be counting poles with their multiplicities, in what follows let us use $P(z_0,f)\in\{0,1,2,\ldots\}$ to denote the order of $z_0\in\calS_\frf$ as a pole of a function $f$.

Define a function $g$ to be the right-hand side of~\eqref{strCar2} multiplied by $z$ (that is, RHS of~\eqref{strCar2} $=\tfrac1z g(z)$). Note that $g$ is a composition of $F^{(1)}$ with two M\"{o}bius transformations. Since M\"{o}bius transformations are bijective and conformal on the Riemann sphere, $g$ is a meromorphic function on $\calS_\frf$ whose degree is equal to $\deg F^{(1)} = l$.
From the definition of $g$ and~\eqref{value1},~\eqref{value2} we get
\begin{align}
\label{value5} g(0_+) &= \frac{1}{\alpha_0}; \quad g(\infty_+) = \bar\alpha_0; \\
\label{value6} g(0_-) &= \frac{ \bar\alpha_0-\bar\gamma^\circ}{1-\alpha_0\bar\gamma^\circ }; \quad
g(\infty_-) = \frac{1-\bar\alpha_0\gamma^\circ}{\alpha_0-\gamma^\circ}.
\end{align}
(this is true even if $\alpha_0=0$ or if $\gamma^\circ = 0$). Let us count the poles of $\tfrac1z g(z)$. Since $\gamma^{\circ}\ne \alpha_0$, we have that $\infty_\pm$ are never poles of $g$ or of $\tfrac{1}{z} g(z)$. It is also clear that $P(z_0,\tfrac1z g) = P(z_0,g)$ for any $z_0\in\calS_\frf\setminus\{\infty_\pm,0_\pm\}$.
Note that $g(0_+) \ne 0$, as well as $g(0_-)\ne 0$ since $\gamma^{\circ}\ne \alpha_0$. This means that $P(0_+,\tfrac1z g) = P(0_+,g)+1$, $P(0_-,\tfrac1z g) = P(0_-,g)+1$.
This proves that $\deg \tfrac{1}{z} g(z) = l+2$. By applying a M\"{o}bius transformation in~\eqref{strCar2}, we can see that $\deg F = l+2$.

%
%

Plugging in~\eqref{value5} and~\eqref{value6} into~\eqref{strCar2} we can see that assuming $\alpha_0\ne \gamma^0$ we always have
\begin{align}
\label{value7} F(0_+) &= 1; \quad F(\infty_+) = -1; \\
\label{value8} F(0_-) &= 1; \quad
F(\infty_-) = -1.
\end{align}
Therefore we can apply Lemma~\ref{lemEqualDegree} to conclude $\deg F = \Deg a = l+2$. We are just left to show that $F$ is of the form~\eqref{simpleF} with $s=1$. For this, let us use~\eqref{strCar2}, solve for $F$ and then compute $F(z)-F^\sharp(z)$. After all the unsightly computations we end up with
\begin{equation}\label{ffsharp}
F(z) - F^\sharp(z) = \frac{ 4z (1-|\alpha_0|^2) \left[ F^{(1)}(z) - F^{(1)}{}^\sharp(z) \right] }{(A(z)+B(z) F^{(1)}(z))(A(z)+B(z)F^{(1)}{}^\sharp(z))},
\end{equation}
where $A(z)=(1-\bar\alpha_0)+z(1-\alpha_0)$, $B(z)= (1+\bar\alpha_0)-z(1+\alpha_0)$. Using~\eqref{value1}, we get $\lim_{z\to 0_+} A(z)+B(z) F^{(1)}(z) = 2$, and $\lim_{z\to 0_+} A(z)+B(z) F^{(1)}{}^\sharp(z) = (1-\bar\alpha_0) + (1+\bar\alpha_0)\frac{\bar\gamma^\circ - 1}{\bar\gamma^\circ + 1}$. The latter limit is in $\bbC\setminus\{0\}$ since $\gamma^\circ\ne \alpha_0$. Using this and~\eqref{formF1}, we get
$$
\lim_{z\to 0_+} \frac{F(z) - F^\sharp(z)}{z} \in \bbC\setminus\{0\}
$$
which means $s=1$ in~\eqref{simpleF} and finishes the proof for $s=1$.

\smallskip

Now suppose $\calC\in\calT_\frf^{[s]}$ with $s\ge 2$. We use the induction. Assume the statement is already proven for $\calC^{(1)}\in\calT_\frf^{[s-1]}$. By~\eqref{value7} and~\eqref{value8} and the induction hypothesis, $F^{(1)}(0_+) = 1, F^{(1)}(\infty_+) = -1, F^{(1)}(0_-) = 1, F^{(1)}(\infty_-) = -1$. Defining $g$ as before to be the right-hand side of~\eqref{strCar2} multiplied by $z$, we get
\begin{equation*}
 g(0_+) = g(0_-)= \frac{1}{\alpha_0}; \quad g(\infty_+) =g(\infty_-) = \bar\alpha_0.
\end{equation*}
As above, this implies that $\deg F = \deg F^{(1)}+2$ and that $F$ satisfies~\eqref{value7} and~\eqref{value8}. Then Lemma~\ref{lemEqualDegree} shows that $\deg F = \Deg a$. Finally, let us reuse~\eqref{ffsharp}: by the induction hypothesis, $F^{(1)}(z) - F^{(1)}{}^\sharp(z)\sim z^{s-1}$, $A(z)+B(z) F^{(1)}(z) \to 2$, $A(z)+B(z) F^{(1)}{}^\sharp(z) \to 2$  as $z\to 0_+$, which proves $F(z) - F^\sharp(z)\sim z^{s}$, that is, $F$ has $z^s$ in front of $\sqrt{R_\frf(z)}$ in~\eqref{simpleF}.

%
%

\medskip
$(\calM_\frf^{s\ge 0})\Rightarrow  (\calT_\frf^{s\ge0})$
Let the Carath\'{e}odory function $F$ of some $\calC$ satisfy the conditions in $(\calM_\frf^{s\ge 0})$ for some $s\ge 0$.

If $s=0$, then $F$ is a minimal Carath\'{e}odory function (Def.~\ref{defMinimal}), so $\calC\in\calT_\frf^{[0]}$.

Suppose $s\ge 1$. By Lemma~\ref{lemInduct}, the Carath\'{e}odory function $F^{(1)}$ of $\calC^{(1)}$ is of the form
\begin{equation}\label{eqAnother}
F^{(1)}(z) = \frac{p^{(1)}(z) + z^k \sqrt{R_\frf(z)}}{a^{(1)}(z)}
\end{equation}
for some $k\ge 0$, and satisfies (2a) and (2b) of Lemma~\ref{lemEqualDegree}.

Notice that $F(0_+) = 1$, and then~\eqref{simpleF} and~\eqref{degsEqualF} with $s\ge 1$ shows that $F(0_-) = 1$ as well. By~\eqref{symmF} we have $F(\infty_+) =F(\infty_-)= -1$.  Let $g(z) = \frac{F(z)+1}{F(z)-1}$. Just as before, this is a meromorphic function on $\calS_\frf$ with $\deg g = \deg F =l+2s$.  Let us count the poles of $\tfrac{1}{z} g(z)$ with their multiplicities. For any $z_0\in \calS_\frf\setminus\{\infty_\pm,0_\pm\}$, we have  $P(z_0,\tfrac1z g) = P(z_0,g)$. Since $g(0_+)=g(0_-) = \infty$ and $g(\infty_+)=g(\infty_-) = 0$, we get $P(\infty_+,\tfrac1z g) = P(\infty_+,g)=0$, $P(\infty_-,\tfrac1z g) = P(\infty_-,g)=0$, while $P(0_+,\tfrac1z g) = P(0_+,g)-1$, $P(0_-,\tfrac1z g) = P(0_-,g)-1$. Therefore $\deg \tfrac1z g(z) = \deg g -2 = l+2s -2$. By~\eqref{strCar}, we get $\deg F^{(1)} = l+2s-2$.

If $s=1$ then $\deg F^{(1)} = l$, which shows that $ F^{(1)}$ is a minimal Carath\'{e}odory function, i.e., $\calC^{(1)}\in\calT_\frf$.

Suppose $s\ge 2$. Let us use~\eqref{strCar}, solve for $F^{(1)}$, and then compute $F^{(1)}(z)-F^{(1)}{}^\sharp(z)$. We end up with
\begin{equation}\label{ffsharp2}
F^{(1)}(z)-F^{(1)}{}^\sharp(z)  = \frac{ 4z (1-|\alpha_0|^2) \left[ F(z) - F^\sharp(z) \right] }{(C(z)+D(z) F(z))(C(z)+D(z)F^\sharp(z))},
\end{equation}
where $C(z)=(1+\bar\alpha_0)+z(1+\alpha_0)$, $D(z)= -(1+\bar\alpha_0)+z(1+\alpha_0)$.
Note that $a^{(1)}(0) \ne 0$ since $F^{(1)}(0_+)=1$. In particular, the left-hand side of~\eqref{ffsharp2} has a zero of order $k\ge 0$ at $0_+$. On the other hand, by~\eqref{simpleF},
the numerator of the right-hand side is $\sim z^{s+1}$ as $z\to 0_+$. As for the denominator, note that $F(z) = 1 + 2\alpha_0 z +O(z^2)$ as $z\to 0_+$ ($F'(0_+)=2\alpha_0$ follows by taking the limit $z\to 0_+$ in~\eqref{strCar}, applying L'H\^{o}pital's rule, and using $F^{(1)}(0_+)=1$). Therefore
\begin{equation*}\label{tempo1}
F^\sharp(z) = F(z) - \tfrac{2z^s\sqrt{R_\frf(z)}}{a(z)} = 1 + 2\alpha_0 z +O(z^2), \quad z\to 0_+
\end{equation*}
since $s\ge 2$. Then it is easy to check that
$$
(C(z)+D(z) F(z))(C(z)+D(z)F^\sharp(z)) =  4(1+\alpha_0)^2 \left(1-\alpha_0 \frac{1+\bar\alpha_0}{1+\alpha_0}\right)^2 z^2 + O(z^3)
$$
as $z\to 0_+$. Note that the coefficient in front of $z^2$ is never $0$. This means that the right-hand side of~\eqref{ffsharp2} has a zero of order $s-1$ at $0_+$. We proved that $k$ in~\eqref{eqAnother} is $s-1$. This shows that $F^{(1)}$ is of the form~\eqref{simpleF} with $k=s-1$, and we just need to justify~\eqref{degsEqualF} for $F^{(1)}$ in order to be able to apply induction. But since $k=s-1 \ge 1$, we get $F^{(1)}(0_-) = F^{(1)}(0_+) = 1$, which implies $F^{(1)}(\infty_-) = F^{(1)}(\infty_+) = -1$. Therefore part (2c) of Lemma~\ref{lemEqualDegree} holds giving us~\eqref{degsEqualF}.
\end{proof}


\section{Spectral theorem}\label{sSpectral}
As we are about to see, locations of the eigenvalues of Jacobi/CMV operators are required to satisfy a certain property with respect to the locations of the anti-bound states. Loosely speaking, every even-numbered real singularity (when counted starting from any of the edges of $\fre$ or $\frf$ in the direction of the gap) cannot be an eigenvalue and therefore must be an anti-bound state. For a lack of a better term we will call it the ``oddly interlacing'' property. Note that in particular it implies (but is stronger than) the following statement: between any two consecutive eigenvalues (which are located in the same gap) there is an odd number of anti-bound states (counted according to their multiplicities).

Let us adopt the conventions $\beta_0:=-\infty$, $\alpha_{l+2}:=+\infty$, $\theta_{2l+1}:=\theta_1+2\pi$.

\begin{definition}\label{defOI_light}
$($Oddly interlacing property$)$
\hspace*{\fill}\\
\indent {\normalfont $\rl$}
Let $\fre$ be a finite gap set~\eqref{e} on $\bbR$.
Suppose we are given two sets $($repeated according to their multiplicities$)$ of real points: $\{e_j\}_{j=1}^N$ and $\{r_j\}_{j=1}^K$ $(N,K<\infty)$. 
We will say that
$$
\{e_j\}_{j=1}^N \mbox{ oddly interlace with } \{r_j\}_{j=1}^K \mbox{ on } \bbR
$$
if
\begin{itemize}
\item $\{e_j\}_{j=1}^N \cap \{r_j\}_{j=1}^K = \varnothing$;
\item For any $k$, $0\le k \le l+1$, let $[\beta_k,\alpha_{k+1}]\cap \left( \{e_j\}_{j=1}^N \cup \{r_j\}_{j=1}^K \right) =:\{x_j\}_{j=1}^M$ $($with multiplicities preserved$)$, where
    \begin{equation}\label{interlacing_light}
    \beta_k\le x_1\le x_2 \le x_3 \le \ldots\le x_{M} \le \alpha_{k+1}.
    \end{equation}
     Then $\{x_2,x_4,\ldots\} \cap \{e_j\}_{j=1}^N = \varnothing$ and $\{x_{M-1},x_{M-3},\ldots\} \cap \{e_j\}_{j=1}^N = \varnothing.$\footnote{\label{redundant}The second condition will turn out to be redundant since $M$ will always end up being finite and odd here. We keep it this way to agree with a more general case~\cite{K_spectral}.}
\end{itemize}

\smallskip{\normalfont $\uc$}
Let $\frf$ be a finite gap set~\eqref{e2} on $\partial\bbD$.
Suppose we are given two sets $($repeated according to their multiplicities$)$ of unimodular points: $\{e_j\}_{j=1}^N$ and $\{r_j\}_{j=1}^K$ with $N,K<\infty$ and $e_j\in\partial\bbD$, $r_j\in\partial\bbD$ for all $j$. 
We will say that
$$
\{e_j\}_{j=1}^N \mbox{ oddly interlace with } \{r_j\}_{j=1}^K \mbox{ on } \partial\bbD
$$
if
\begin{itemize}
\item $\{e_j\}_{j=1}^N \cap \{r_j\}_{j=1}^K = \varnothing$;
\item For any $k$, $1\le k \le l$, let $\{e^{i\theta}: \theta_{2k}\le \theta \le \theta_{2k+1}\} \cap \left(\{e_j\}_{j=1}^N \cup \{r_j\}_{j=1}^K \right) =:\{e^{ix_j}\}_{j=1}^M$ $($with multiplicities preserved$)$, where
    \begin{equation}\label{interlacing_lightF}
    \theta_{2k}\le x_1\le x_2 \le x_3 \le \ldots\le x_{M} \le \theta_{2k+1}.
    \end{equation}
     Then $\{e^{i x_2},e^{i x_4},\ldots\} \cap \{e_j\}_{j=1}^N = \varnothing$ and $\{e^{i x_{M-1}},e^{i x_{M-3}},\ldots\} \cap \{e_j\}_{j=1}^N = \varnothing.$\textsuperscript{$\ref{redundant}$}
\end{itemize}
\end{definition}
\begin{remarks}
1. If $N=0$ (no eigenvalues), then this property trivially holds for any configuration of $\{r_j\}$.

2. If we think of $e_j$'s as eigenvalues, $r_j$'s as anti-bound states, 
then this property states that every even-numbered real singularity (when counted starting from any of the edges in the direction of the gap), must be an anti-bound state.
\end{remarks}

Now we can state the characterization of the spectral measures. Let us define
\begin{align*}
\sg_\fre(x) & = \begin{cases}
(-1)^{l+1-k} & \mbox{ if } x\in (\alpha_k,\beta_k), \\
0 & \mbox{ otherwise},
\end{cases} \\
\sg_\frf(\theta) & = \begin{cases}
(-1)^{k-1} & \mbox{ if } \theta\in (\theta_{2k-1},\theta_{2k}), \\
0 & \mbox{ otherwise},
\end{cases}
\end{align*}
the functions that change sign from one band to another.

\begin{theorem}\label{thmS}
\hspace*{\fill}\\
\indent {\normalfont $\rl$}
The following are equivalent:
\begin{itemize}
\item[$(\calT_\fre^{s\ge 0})$] Jacobi matrix $\calJ[a_n,b_n]_{n=1}^\infty$ belongs to $\calT_\fre^{[s]}$ $(s\ge 0)$ $($see Def.~\ref{defFiniteRange}$)$. 
\item[$(S_\fre^{s\ge 0})$] The spectral measure $\mu$ of $\calJ$ is of the form
\begin{equation}\label{meas2}
d\mu(x) = \frac{\sqrt{|R_\fre(x)|}}{|d(x)|} 1_{x\in\fre} dx + \sum_{j=1}^N w_j \delta_{E_j},
\end{equation}
where
\begin{itemize}
\item[$(S_a)$] $d(z)$ is a real polynomial of degree $\Deg d = l + s$ which satisfies\footnote{Up to a normalization, this condition is equivalent to saying that all the zeros of $d$ are either real or come in complex-conjugate pairs, and that there is an odd number of zeros in each gap (in particular $\deg d \ge l$). See Theorem~\ref{thmR} below.}
     $\sgn \,d(x) = \sg_\fre(x)$
    on $\operatorname{Int}(\fre)$; 
  \item[$(S_b)$]  $N<\infty$ and $E_j\in \bbR\setminus\fre$. Each $E_j$ is a simple zero of $d(z)$. Moreover, 
  $\{E_j\}_{j=1}^N$  oddly interlace $($Def.~\ref{defOI_light}$)$ with
  \begin{equation}\label{spResonances}
  \{R_j\}_{j=1}^{K}:=\left\{\mbox{zeros of }d(z)\mbox{ in }\bbR\setminus\{E_j\}_{j=1}^N\right\},
  \end{equation}
  repeated according to their multiplicities;
  \item[$(S_c)$] For each  $1\le j \le N$,
  \begin{equation}\label{weights2}
  w_j= 2\pi \frac{\sqrt{R_\fre(E_j)}}{\left| d'(E_j) \right| }.
  \end{equation}
\end{itemize}
\end{itemize}


\smallskip{\normalfont $\uc$}
The following are equivalent:
\begin{itemize}
\item[$(\calT_\frf^{s\ge 0})$] CMV matrix $\calC[\alpha_n]_{n=0}^\infty$ belongs to $\calT_\frf^{[s]}$ $(s\ge 0)$ $($see Def.~\ref{defFiniteRange}$)$.
\item[$(S_\frf^{s\ge 0})$] The spectral measure $\mu$ of $\calC$ is of the form
\begin{equation}\label{meas2F}
d\mu(\theta) = \frac{\sqrt{|R_\frf(\theta)|}}{|d(e^{i \theta})|} 1_{\theta\in\frf} d\theta + \sum_{j=1}^N w_j \delta_{E_j},
\end{equation}
where
\begin{itemize}
\item[$(S_a)$] $d(z)$ is a polynomial of degree $l+2s$, and\footnote{By $\operatorname{Int}(\frf)$ here we mean $\frf\setminus\cup_{j=1}^{2l}\{\theta_j\}$} on $\operatorname{Int}(\frf)$ it satisfies\footnote{\label{footnote}Since $\Deg d =l+2s$, up to a normalization, this condition is equivalent to saying that all the zeros of $d$ are either unimodular or come in symmetric (with respect to $\partial\bbD$) pairs, and that there is an odd number of zeros in each gap. See Theorem~\ref{thmR} below.}\textsuperscript{,}\footnote{If $l$ is odd, then this condition has $z^{-1/2}$. One can just choose any branch of the square root with a branch cut that goes through the last gap $(\theta_{2l}-2\pi,\theta_1)$ (see the discussion in the end of Subsection~\ref{ssMeromorphic}). Alternatively, the comment in~\textsuperscript{\ref{footnote}} is still valid for $l$ odd.}
    \begin{equation}\label{messySign}
    \sgn \left[ e^{-is\theta-il\theta/2}\,d(e^{i \theta}) \right] = \sg_\frf(\theta);
    \end{equation}
  \item[$(S_b)$]  $N<\infty$ and $E_j\in \partial\bbD\setminus\frf$. Each $E_j$ is a simple zero of $d(z)$. Moreover, 
  $\{E_j\}_{j=1}^N$  oddly interlace $($Def.~\ref{defOI_light}$)$ with
  \begin{equation}\label{spResonancesF}
  \{R_j\}_{j=1}^{K}:=\left\{\mbox{zeros of }d(z)\mbox{ in }\partial\bbD\setminus\{E_j\}_{j=1}^N\right\},
  \end{equation}
  repeated according to their multiplicities;
  \item[$(S_c)$] For each  $1\le j \le N$,
  \begin{equation}\label{weights2F}
  w_j= 2\pi \frac{\sqrt{ |R_\frf(E_j) | }}{\left| d'(E_j) \right| }.
  \end{equation}
\end{itemize}
\end{itemize}
\end{theorem}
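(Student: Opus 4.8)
The plan is to prove both implications by pivoting on the classification of $m$-functions in Theorem~\ref{thmM}: there we showed that $\calJ\in\calT_\fre^{[s]}$ ($s\ge1$) is equivalent to $m$ having the form~\eqref{simpleM} with $\deg m=\Deg a=l+s$, so the entire remaining content of Theorem~\ref{thmS} is the dictionary between this algebraic form and the measure~\eqref{meas2}. The dictionary is supplied by the Herglotz representation (Lemma~\ref{herglotz}) together with the boundary-value formula~\eqref{densityM}. The case $s=0$ (the torus itself) is the classical purely absolutely continuous description: by Lemma~\ref{lemMinimal} the minimal Herglotz function has $\Deg a=l$ and no poles off $\fre$, so $N=0$ and~\eqref{densityM} gives the stated density with $d=\pi a$.

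For the forward direction $(\calT_\fre^{s\ge0})\Rightarrow(S_\fre^{s\ge0})$, write $m=\frac{p+\sqrt{R_\fre}}{a}$ from Theorem~\ref{thmM} and set $d:=\pi a$. On each band the sign information $\sqrt{R_\fre(x_+)}\in(-1)^{l+1-k}i\bbR_+$ turns~\eqref{densityM} into $\frac{d\mu}{dx}=\frac{(-1)^{l+1-k}\sqrt{|R_\fre(x)|}}{d(x)}$; nonnegativity of the density forces $d$ to be real with $\sgn d(x)=\sg_\fre(x)$ on $\Int(\fre)$, which is $(S_a)$, and then the density is exactly $\frac{\sqrt{|R_\fre|}}{|d|}$. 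The pure point part comes from~\eqref{herg2}: by Lemma~\ref{lemEqualDegree}(2a),(2b) every finite pole of $m$ off $\fre$ (the eigenvalues $E_j$, Definition~\ref{defSingularity}) is a simple zero of $a$ lying on $\calS_{\fre,+}$, so $p(E_j)=\sqrt{R_\fre(E_j)}$ and the residue of $m$ at $E_j$ equals $\frac{2\sqrt{R_\fre(E_j)}}{a'(E_j)}$; since a point mass $w_j$ contributes $-w_j$ to this residue, computing $w_j=-\operatorname{Res}_{E_j}m$ and rewriting through $d=\pi a$ gives $(S_c)$. The remaining real zeros of $d$ are the resonances $R_j$ of~\eqref{spResonances}, and it remains to see that $(S_b)$ holds.

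The main obstacle, and the heart of the proof, is the equivalence between the oddly interlacing property $(S_b)$ and the Herglotz (positivity) requirement. Because $d$ is real and obeys $(S_a)$, it changes sign between consecutive bands, so each gap contains an odd number $M_k$ of real zeros; ordering them $\beta_k\le x_1\le\cdots\le x_{M_k}\le\alpha_{k+1}$ as in~\eqref{interlacing_light}, the sign of $a'(x_i)$ alternates as $(-1)^i(-1)^{l+1-k}$. The positivity $w_j=-\operatorname{Res}_{E_j}m>0$ together with $\sqrt{R_\fre(E_j)}\in(-1)^{l+1-k}\bbR_+$ forces $\sgn a'(E_j)=-(-1)^{l+1-k}$, i.e. the index $i$ is odd; hence eigenvalues occupy only odd positions and every even-indexed singularity is a resonance, which is precisely $(S_b)$ (the endpoint cases, where $m$ has an at-most-double pole, are absorbed by allowing $x_1=\beta_k$ or $x_{M_k}=\alpha_{k+1}$, and no eigenvalue can sit on $\fre$ by Remark 3 after Definition~\ref{defSingularity}). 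Running this sign bookkeeping in reverse is exactly what yields the converse.

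For $(S_\fre^{s\ge0})\Rightarrow(\calT_\fre^{s\ge0})$, set $a:=d/\pi$ and reconstruct $m$. Seek a polynomial $p$ interpolating $p(E_j)=\sqrt{R_\fre(E_j)}$ at the eigenvalues, $p(R_j)=-\sqrt{R_\fre(R_j)}$ at the resonances, and $p=\pm\sqrt{R_\fre}$ along one sheet at each complex-conjugate zero pair of $d$; equivalently one requires $a\mid(p^2-R_\fre)$, exactly the divisibility exploited in the proof of Lemma~\ref{lemInduct}, and the degree bound $\Deg p\le\Deg a$ gives $m=\frac{p+\sqrt{R_\fre}}{a}$ of the form~\eqref{simpleM} with no pole at $\infty_\pm$. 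These interpolation conditions make the $+$-sheet residues at resonances cancel and those at eigenvalues equal $-w_j$, while $(S_a)$ makes the a.c.\ density positive and $(S_b)$ (via the sign computation above) makes all $w_j>0$; hence $m$ is genuinely Herglotz and, by uniqueness of the Herglotz representation, is the Cauchy transform of $\mu$ normalized as $m\sim-\tfrac1z$ at $\infty_+$. Then $\deg m=\Deg a=l+s$, and Theorem~\ref{thmM} delivers $\calJ\in\calT_\fre^{[s]}$. The entire argument runs in parallel for $\uc$: one replaces $m$ by $F$, uses~\eqref{densityF} and the stripping law~\eqref{strCar} in place of~\eqref{densityM} and~\eqref{stripping}, invokes the Carath\'eodory half of Theorem~\ref{thmM}, and reads the sign pattern off~\eqref{messySign} with the gap sign $\sg_\frf$, the only new bookkeeping being the factor $z^{-is\theta-il\theta/2}$ and the last-gap convention for $l$ odd discussed at the end of Subsection~\ref{ssMeromorphic}.
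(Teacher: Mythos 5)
Your forward direction follows the paper's skeleton (pivot on Theorem~\ref{thmM}, read off $(S_a)$ from~\eqref{densityM} and $(S_c)$ from residues of~\eqref{continuation}), but your derivation of $(S_b)$ is a genuine variant: you deduce the odd position of each $E_j$ from the \emph{a priori} positivity of $w_j=-\operatorname{Res}_{(E_j)_+}m$ combined with the sign alternation of $a$ along the gap, whereas the paper tracks the boundary limits of $m-m^\sharp$ through the gap and derives a contradiction with $\lim_{z\to(x_{2j}-0)_+}m(z)=+\infty$ at a putative eigenvalue. Your route is legitimate and arguably shorter, with two caveats: at a resonance of multiplicity $\ge 2$ the quantity $a'(x_i)$ vanishes, so the bookkeeping must be phrased in terms of sign changes of $a$ (which have the same parity as the multiplicity count) rather than signs of $a'$; and the second half of Definition~\ref{defOI_light} (the count from the right edge, which is \emph{not} automatic on the two unbounded intervals where $M$ need not be odd) is not addressed by a left-edge parity argument alone -- the paper disposes of it by running the symmetric analysis from the other edge.

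The reverse direction is where there is a real gap. You build a candidate $\widehat m=\frac{p+\sqrt{R_\fre}}{a}$ by Lagrange interpolation and then assert that positivity of the a.c.\ density plus positivity of the $w_j$ makes $\widehat m$ ``genuinely Herglotz,'' after which ``uniqueness of the Herglotz representation'' identifies it with the Cauchy transform of $\mu$. This inference does not stand as written. First, $d$ may have non-real zeros (conjugate pairs are explicitly permitted by $(S_a)$); at each such zero the interpolation condition $p=\pm\sqrt{R_\fre}$ involves a choice of sheet, and if the pole lands on $\calS_{\fre,+}$ the candidate has a pole in $\bbC_+$ and is not Herglotz at all -- you neither specify the choice nor prove that the correct choice exists and is forced. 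Second, even with all poles placed correctly, ``positive boundary density and negative residues'' does not by itself yield the Herglotz property; one needs an actual Stieltjes-inversion/contour argument to represent $\widehat m$ as the Cauchy transform of the assembled measure, and only \emph{then} does uniqueness identify it with $m$. The paper sidesteps every one of these issues by reversing the logic: it starts from the true $m$-function of $\mu$ (Herglotz for free), shows that $\widetilde m:=m-\frac{\pi\sqrt{R_\fre}}{d}$ has vanishing imaginary part on $\fre$ and hence continues across $\fre$ by Schwarz reflection to a rational function, and then eliminates spurious poles (the argument that $p_2$ is constant) using $(S_b)$ and the Herglotz representation. That reflection step is the missing idea; without it, or without a genuine proof that your interpolant is Herglotz, the implication $(S_\fre^{s\ge0})\Rightarrow(\calT_\fre^{s\ge0})$ is not established.
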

\begin{remarks}
1. We stress that $(S_b)$ is a statement about which points are allowed to be eigenvalues. There is no implicit restriction on $d(z)$ here, and any function $d(z)$ that satisfies ($S_a$) (up to a multiplicative normalization constant) can occur in~\eqref{meas2}/\eqref{meas2F}.

2. Similarly, $(S_c)$ is a statement about the eigenweights only (again, up to an inconsequential normalization). Indeed, note that each $w_j$ in~\eqref{weights2}/\eqref{weights2F} is positive, so there is no implicit positivity restriction here either.



3. As is clear from $(2a)$ of Lemma~\ref{lemEqualDegree}, resonances cannot occur at the points which are eigenvalues. Therefore~\eqref{continuation}/\eqref{continuationF} show that in this case resonances occur precisely at those zeros of $d(z)$ that are not eigenvalues. 
This explains that the $\{R_j\}_{j=1}^K$ in $(S_b)$ are precisely the anti-bound states 
of the operator.
\end{remarks}

\begin{proof}
\rl

$(\calT_\fre^{s\ge 0})\Rightarrow(S_\fre^{s\ge 0})$
Let $\calJ\in\calT_\fre^{[s]}$, $s\ge 0$. Then its $m$-function is of the form
\begin{equation}\label{spEq0}
m(z)=\frac{p(z)+\sqrt{R_\fre(x)}}{a(x)},
\end{equation}
and satisfies Def.~\ref{defMinimal} or $(\calM_\fre^{s\ge 1})$ of Theorem~\ref{thmM}. By the computation~\eqref{densityM} and the fact that $i\sqrt{R_\fre(x)}$ changes sign from one band of $\fre$ to another, we obtain that $a$ is real, $\sgn \, a(x) = \sg_\fre(x)$ on $\fre$, and the a.c. density of $\mu$ is therefore
\begin{equation}\label{spEq1}
\frac{d\mu(x)}{dx}= \frac{\sqrt{|R_\fre(x)|}}{\pi |a(x)|}
\end{equation}
on $\fre$. This proves $(S_a)$ if one takes $d(x)=\pi a(x)$.

By the Herglotz representation, Lemma~\ref{herglotz}, each of the eigenvalues $E_j$ of $\calJ$ must be a pole of~\eqref{spEq0}, and therefore a zero of the polynomial $d(z)$. Moreover, by (2b) of Lemma~\ref{lemEqualDegree}, $m$ has at most $\tfrac{1}{(z-z_0)^{1/2}}$ singularity when $z_0\in\cup_{j=1}^{l+1}\{\alpha_j,\beta_j\}$, which means the endpoints cannot be eigenvalues by applying Herglotz representation again. Also, $E_j\notin\operatorname{Int}(\fre)$  since~\eqref{spEq1} must be integrable. Note that $(E_j)_+$ is always a \textit{simple} pole of $m$ by~\eqref{m}. Therefore by~\eqref{continuation}, if $E_j$ is a zero of $a$ of order higher than $1$, then $m$ would also have a pole at $(E_j)_-$ which is impossible by $(2a)$ of Lemma~\ref{lemEqualDegree}. Therefore each $E_j$ is a simple zero of $d$.

Now we need to show that $\{E_j\}_{j=1}^N$ oddly interlace with $\{R_j\}_{j=1}^K$ defined by~\eqref{spResonances}. Notice that $m((R_j)_+)$ is finite, which implies that each $(R_j)_-$ is indeed a pole of $m$ by~\eqref{continuation}, that is, $\{R_j\}_{j=1}^K$ are the anti-bound states of $\calJ$. Fix some $0\le k \le l+1$ and order the singularities on $[\beta_k,\alpha_{k+1}]$ as in~\eqref{interlacing_light}. $\beta_k$ is at most a first order pole of $m$, so the Taylor series of $m$ at $\beta_k$ is of the form
\begin{equation*}\label{taylor}
m(z)=\sum_{j=-1}^\infty k_j (z-\beta_k)^{j/2},
\end{equation*}
which implies
\begin{equation}\label{temp1}
m(z)-m^\sharp(z) =2k_{-1}(z-\beta_k)^{-1/2}+ 2k_1 (z-\beta_k)^{1/2} + O((z-\beta_k)^{3/2}).
\end{equation}
By~\eqref{densityM}, the left-hand side of~\eqref{temp1} belongs to $i\bbR_+$ on $\pi^{-1}((\alpha_k,\beta_k))\cap \calS_{\fre,+}$. Choose for definiteness $(z-\beta_k)^{1/2}$ to be positive for $z\in\calS_{\fre,+}$, $\pi(z)>\beta_k$. Then in order for the right-hand side of~\eqref{temp1} to be in $i\bbR_+$ on $\pi^{-1}((\alpha_k,\beta_k))\cap \calS_{\fre,+}$, we  need
 either $k_{-1}<0$, or $k_{-1}=0$ and $k_1>0$ (note that it is not possible to have $k_{-1}=k_1=0$ because of~\eqref{continuation}).

If $k_{-1}<0$, then $m-m^\sharp$ is negative on $\calS_{\fre,+}$ to the right of $\beta_k$. Since in this case $\beta_k$ is a first order pole, $x_1=\beta_k$. Note that $m-m^\sharp$ never vanishes on $\pi^{-1}(\bbC\setminus\cup_{j=1}^{l+1}\{\alpha_j,\beta_j\})$ by~\eqref{continuation}, and thus
\begin{equation}\label{residue}
\lim_{z\to (x_2-0)_+} m(z)-m^\sharp(z)=-\infty
\end{equation}
if $M\ge 2$.
Now, if $x_2$ were an eigenvalue, then $\lim_{z\to (x_2-0)_+} m(z)=+\infty$ (by~\eqref{m}) and $\lim_{z\to (x_2-0)_+} m^\sharp(z)$ is finite by (2a) of Lemma~\ref{lemEqualDegree}, which would contradict to \eqref{residue}. This implies that $x_2$ is an anti-bound state.

Let us now consider the case $k_{-1}=0$ and $k_1>0$. $m-m^\sharp$ is positive on $\calS_{\fre,+}$ to the right of $\beta_k$. Since $m-m^\sharp$ cannot be equal to zero, we obtain $$\lim_{z\to (x_1-0)_+} m(z)-m^\sharp(z)=+\infty.$$
If $x_1=x_2$ then it is a resonance, since $m$ is Herglotz on $\calS_{\fre,+}$ and therefore cannot have second order poles there. If $x_1\ne x_2$, then $m-m^\sharp\ne 0$ on $\pi^{-1}((\beta_k,\alpha_{k+1}))$ gives
\begin{align*}
&\lim_{z\to (x_1+0)_+} m(z)-m^\sharp(z)=-\infty,\\
&\lim_{z\to (x_2-0)_+} m(z)-m^\sharp(z)=-\infty,
\end{align*}
which implies that $x_2$ is an anti-bound state by the same arguments as above. Checking the signs of $m-m^\sharp$ further, one sees that~\eqref{residue} holds at any $x_j$ with even $j$, which means they are anti-bound states.

The arguments for $\{x_{M-1},x_{M-3},\ldots\}$ are analogous if one examines the signs into the gap starting from the edge $\alpha_{k+1}$. This proves $(S_b)$.

To prove $(S_c)$ let us put $z=(E_j)_+$ in~\eqref{continuation}, and take residues of both sides. The residue of $m$ is $-w_j$ by~\eqref{m}, and since $(E_j)_-\in\calS_-$ cannot be a pole of $m$ by (2a) of Lemma~\ref{lemEqualDegree}, the residue of $m^\sharp$ is zero. Therefore
\begin{equation}\label{weights_complex}
w_j = -2 \res_{z=(E_j)_+} \frac{\sqrt{R_\fre (z)}}{a(z)} = -2\pi \frac{\sqrt{R_\fre ((E_j)_+)}}{d'(E_j)}.
\end{equation}
Finally we note that the latter expression is automatically positive given $(S_b)$ and therefore is equal to~\eqref{weights2}.
Suppose  that $E_j\in(\beta_{l+1},+\infty)$.  By $(S_a)$, $d(z)$ is positive on $(\alpha_{l+1},\beta_{l+1})$, and by the oddly interlacing property, there is an even number of  zeros of $d(z)$ (counting with multiplicities) on the interval $[\beta_{l+1},E_j)$ . Thus $d'(E_j)<0$, and since $\sqrt{R_\fre((E_j)_+)}>0$, we conclude that the right-hand side of~\eqref{weights_complex} is positive. The arguments for $E_j$'s on any of the gaps or on $(-\infty,\alpha_1)$ are similar if one uses the sign condition on $d(z)$ from $(S_a)$, the sign changes of $\sqrt{R_\fre(z)}$, and the oddly interlacing property.

\smallskip

$(S_\fre^{s\ge 0})\Rightarrow(\calT_\fre^{s\ge 0})$
Note that $(S_a)$ requires $d$ to have at least $l$ zeros. The case $\Deg d = l$ corresponds to $\calJ\in\calT_\fre$ and is well-known. Suppose $\mu$ satisfies $(S_\fre^{s\ge 0})$ with $\Deg d \ge l+1$, and let $m$ be the $m$-function~\eqref{m}. Define $\tilde{m}(z) = m(z)-\frac{\pi \sqrt{R_\fre(z)}}{d(z)}$ on $\bbC\setminus\fre$. By the Herglotz representation, Lemma~\ref{herglotz}, $\imag \tilde{m}(x+i\veps) = \imag \tilde{m}(x-i\veps) = 0$ and $\real\tilde{m}(x+i\veps) = \real \tilde{m}(x-i\veps)$ for $x\in\fre$. This shows that $\tilde{m}$ has a meromorphic continuation to $\bbC$ (there is a small issue at the endpoints $z_0\in\cup_{j=1}^{l+1}\{\alpha_j,\beta_j\}$, which can be resolved by directly showing that $|(z-z_0)\tilde{m}(z)|$ is bounded around $z_0$ and therefore $z_0$ cannot be an essential singularity). Note that both $m$ and $\frac{\sqrt{R_\fre(z)}}{d(z)}$ have limits (possibly infinite) as $z\to\infty$. This implies that $\tilde{m}$ is a meromorphic function on $\bbC\cup\{\infty\}$, and therefore must be a rational function. This shows that $m$ has a meromorphic continuation to $\calS_\fre$ and is of the form
$$
m(z) = \frac{p(z) + p_2(z)\sqrt{R_\fre(z)}}{p_2(z)a(z) }
$$
for some polynomials $p, p_2$ with no common zeros. Suppose that $p_2(z_0)=0, p(z_0)\ne 0$. If $a(z_0)=0$ then $m$ has a pole of order $\ge2$ at $(z_0)_+$ which implies that $z_0$ is not an eigenvalue by the Herglotz representation, Lemma~\ref{herglotz}. If $a(z_0) \ne 0$ then again $z_0$  cannot be an eigenvalue by $(S_b)$. But then $m$ must be regular at $(z_0)_+$ which contradicts to $p(z_0)\ne 0$. We proved that $p_2$ must be a constant which may be divided out to produce~\eqref{spEq0}. We claim that $m$ satisfies Lemma~\ref{lemEqualDegree}. Indeed, $(2a)$ follows by taking the residues of~\eqref{continuation} and using~\eqref{weights_complex} (note that~\eqref{weights_complex} is equal to~\eqref{weights2} by $(S_b)$ as we showed above). $(2b)$ follows since second order pole of $a(z)$ at an endpoint of $\fre$ would make $\mu$ non-integrable. Finally, $(2c)$ follows from $m(z) = -\tfrac{1}{z}$, $z\to\infty_+$,~\eqref{continuation}, and $\Deg R_\fre = 2(l+1)$, $\Deg d \ge l+1$. Theorem~\ref{thmM} finishes the proof.

\smallskip{\normalfont $\uc$}


$(\calT_\frf^{s\ge 0})\Rightarrow(S_\frf^{s\ge 0})$
Let $\calC\in\calT_\frf^{[s]}$, $s\ge 0$. Then its Carath\'{e}odory function $F$ satisfies $(\calM_\fre^{s\ge 0})$ of Theorem~\ref{thmM}. By the computation~\eqref{densityF} and the fact that $e^{-il\theta/2}\sqrt{R_\frf(e^{i\theta})}$ is purely imaginary and changes sign from one band of $\frf$ to another, we obtain that $e^{-is\theta-il\theta/2} a(e^{i\theta})$ is purely imaginary, $\sgn \left[ \tfrac{1}{i} e^{-is\theta-il\theta/2} a(e^{i\theta})  \right] = \sg_\frf(\theta)$ on $\frf$, and the a.c. density of $\mu$ is therefore
\begin{equation}\label{spEq1F}
\frac{d\mu(\theta)}{d\theta}= \frac{\sqrt{|R_\frf(\theta)|}}{2 \pi |a(e^{i\theta})|}
\end{equation}
on $\frf$. This proves $(S_a)$ if one takes $d(z)=-2\pi i a(z)$.

By the Herglotz representation, Lemma~\ref{herglotz}, each of the point masses $E_j$ of $\mu$ must be a pole of $F$, and therefore a zero of the polynomial $d(z)$. Moreover, by (2b) of Lemma~\ref{lemEqualDegree}, $F$ has at most $\tfrac{1}{(z-z_0)^{1/2}}$ singularity when $z_0\in\cup_{j=1}^{2l}\{e^{i\theta_j}\}$, which means the endpoints cannot be eigenvalues by applying Herglotz representation again. Also, $E_j\notin\operatorname{Int}(\frf)$  since~\eqref{spEq1F} must be integrable. Note that $(E_j)_+$ is always a \textit{simple} pole of $F$ by~\eqref{F}. Therefore by
\begin{equation}\label{continuationF}
F(z) = F^\sharp(z) + \frac{2 z^s \sqrt{R_\frf(z)}}{a(z)},
\end{equation}
if $E_j$ is a zero of $a$ of order higher than $1$, then $F$ would also have a pole at $(E_j)_-$ which is impossible by $(2a)$ of Lemma~\ref{lemEqualDegree}. Therefore each $E_j$ is a simple zero of $d$.

Now we need to show that $\{E_j\}_{j=1}^N$ oddly interlace with $\{R_j\}_{j=1}^K$ defined by~\eqref{spResonancesF}. Notice that $F((R_j)_+)$ is finite, which implies that each $(R_j)_-$ is indeed a pole of $F$ by~\eqref{continuationF}, that is, $\{R_j\}_{j=1}^K$ are the anti-bound states of $\calC$. Fix some $1\le k \le l$ and order the singularities on the gap $G_j$ (see~\eqref{gapF}) as in~\eqref{interlacing_lightF}. By~\eqref{symmF} (and its analytic continuation to $\calS_{\frf,-}$), $F((e^{i\theta})_+)-F^\sharp((e^{i\theta})_+)$ is purely imaginary for $\theta\in[\theta_{2k},\theta_{2k+1}]$. Consider two cases:

(i) $(e^{i\theta_{2k}})_+$ is not a pole of $F$. Note that by definition (see Subsection~\ref{ssMeromorphic}), $e^{-il\theta/2}\sqrt{R_\frf((e^{i\theta})_+)}$ belongs to $(-1)^{k-1}i\bbR_+$ on $[\theta_{2k-1},\theta_{2k}]$ and to $(-1)^{k-1} \bbR_+$ on $[\theta_{2k},\theta_{2k+1}]$. As we just established, $e^{-is\theta-il\theta/2} a(e^{i\theta})$ is in $(-1)^{k-1}i\bbR_+$ for $\theta\in[\theta_{2k-1},\theta_{2k}]$, as well as for $\theta$'s slightly to the right of $\theta_{2k}$ since $e^{i\theta_{2k}}$ is not a zero of $a$. Therefore by~\eqref{continuationF}, $F((e^{i\theta})_+)-F^\sharp((e^{i\theta})_+)$ is in $\tfrac{(-1)^{k-1}}{(-1)^{k-1}i} \bbR_+ = -i \bbR_+$ for $\theta = \theta_{2k} + \veps$, $0<\veps \ll 1$. Note that $F-F^\sharp$ never vanishes inside the gap, but goes to infinity at $(e^{i x_2})_+$ while being purely imaginary. We can conclude that $F((e^{i\theta})_+)-F^\sharp((e^{i\theta})_+) \to -i\infty$ when $\theta \to x_1-0$, so $\to +i\infty$ when $\theta\to x_1+0$, and therefore $\to +i\infty$ when $\theta\to x_2-0$ (we assume $x_1\ne x_2$, which can be treated similarly). If $e^{i x_2}$ were an eigenvalue, then by~\eqref{cara2} we would have
\begin{align}
\mu(\{x_2\}) & = \lim_{r \uparrow 1} \left( \frac{1-r}{2}\right) F((r e^{i x_2})_+) \\
\label{ei1} & = -\tfrac{1}{2} e^{-i x_2} \res_{z=(e^{i x_2})_+} F(z) \\
\label{ei2} &= -i  \lim_{\theta\to x_2} \sin\left(\tfrac{\theta-x_2}{2}\right) F((e^{i\theta})_+) \\
\label{ei3} & = -i  \lim_{\theta\to x_2-0} \sin\left(\tfrac{\theta-x_2}{2}\right) \left[F((e^{i\theta})_+)-F^\sharp((e^{i\theta})_+)\right],
\end{align}
where~\eqref{ei1} comes from writing the definition of the residue and taking the limit $z\to (e^{ix_2})_+$ along $(re^{ix_2})_+$, $r\uparrow1$; \eqref{ei2} comes from taking the same limit along $(e^{i\theta})_+$, $\theta\to x_2$; and~\eqref{ei3} follows from regularity of $F$ at $(e^{ix_2})_-$ (by $(2a)$ of Lemma~\ref{lemEqualDegree}). But then using $F((e^{i\theta})_+)-F^\sharp((e^{i\theta})_+) \to +i\infty$ when $\theta\to x_2-0$, we obtain $\mu(\{x_2\})\le 0$, a contradiction. Therefore $e^{ix_2}$ must be a resonance.

(ii) If $(e^{i\theta_{2k}})_+$ is a pole of $F$ (then $x_1=\theta_{2k}$, of course), then $e^{-is\theta-il\theta/2} a(e^{i\theta})$ is in $(-1)^k i\bbR_+$ for $\theta$'s immediately to the right of $\theta_{2k}$. Then $F((e^{i\theta})_+)-F^\sharp((e^{i\theta})_+)$ is in $i \bbR_+$  to the right of $\theta_{2k}$, and therefore $F((e^{i\theta})_+)-F^\sharp((e^{i\theta})_+) \to +i\infty$ when $\theta \to x_2-0$. The rest of the arguments in (i) show that $e^{ix_2}$ is then a resonance.

That the rest of $e^{ix_{2j}}$ and $e^{ix_{M-2j}}$ are resonances can be shown in the exact same way.


To prove $(S_c)$ let us put $z=(E_j)_+$ in~\eqref{continuationF}, and take residues of both sides. The residue of $F$ is $-2 E_j w_j$ by~\eqref{ei1}, and since $(E_j)_-\in\calS_{\frf,-}$ cannot be a pole of $F$ by (2a) of Lemma~\ref{lemEqualDegree}, the residue of $F^\sharp$ is zero. Therefore
\begin{equation*}
-2 E_j w_j = 2 \res_{z=(E_j)_+} \frac{z^s\sqrt{R_\frf (z)}}{a(z)} = 4\pi \frac{ E^s_j \sqrt{R_\frf ((E_j)_+)}}{i \, d'(E_j)},
\end{equation*}
which is equivalent to
\begin{equation}\label{weights_complexF}
 w_j = - 2\pi \frac{ E^{-l/2}_j \sqrt{R_\frf ((E_j)_+)}}{i E_j^{-s-l/2+1}\, d'(E_j)}.
\end{equation}
Finally we note that the latter expression is automatically positive given $(S_b)$ and therefore is equal to~\eqref{weights2F}.
Indeed, suppose  that $E_j = e^{i x_0}$ with some $x_0\in(\theta_{2k},\theta_{2k+1})$.  By $(S_a)$, $e^{-is\theta-il\theta/2} d(e^{i\theta})$ has sign $(-1)^{k-1}$  on $(\theta_{2k-1},\theta_{2k})$, and by $(S_b)$ there is an even number of  zeros of $d(e^{i\theta})$ (counting with multiplicities) on the interval $[\theta_{2k},x_0)$ . Thus
$$
\frac{d}{d\theta} e^{-is\theta-il\theta/2} d(e^{i\theta}) \Big|_{\theta=x_0}= i E_j^{-s-l/2+1} d'(E_j)
$$
has $(-1)^k$ sign. Since $E_j^{-l/2} \sqrt{R_\frf((E_j)_+)} \in (-1)^{k-1}\bbR_+$, we conclude that the right-hand side of~\eqref{weights_complexF} is positive.

\smallskip

$(S_\frf^{s\ge 0})\Rightarrow(\calT_\frf^{s\ge 0})$
Suppose $\mu$ satisfies $(S_\frf^{s\ge 0})$, and let $F$ be its Carath\'{e}odory function~\eqref{F}. Repeating the arguments from $(S_\fre^{s\ge 0})\Rightarrow(\calT_\fre^{s\ge 0})$, we see that $F$ is of the form
$$
F(z) = \frac{p(z) + z^s \sqrt{R_\frf(z)}}{a(z) }
$$
where $a(z) = -\tfrac{1}{2\pi i} d(z)$. Then one checks that $F$ satisfies Lemma~\ref{lemEqualDegree}: $(2a)$ follows by taking the residues of~\eqref{continuationF} and using~\eqref{weights_complexF} (note that~\eqref{weights_complexF} is equal to~\eqref{weights2F} by $(S_b)$ as we showed above); $(2b)$ follows since second order pole of $a(z)$ at an endpoint of $\frf$ would make $\mu$ non-integrable; $(2c)$ follows from $F(0_+) = 1$,~\eqref{continuationF}, and $\Deg R_\fre = 2l$, $\Deg a = l+2s$. Theorem~\ref{thmM} finishes the proof.
\end{proof}

%
%

\section{Inverse resonance problem: existence and uniqueness}\label{sResonance}

We can now solve the inverse resonance problem: we give necessary and sufficient conditions for a configuration of points to be the eigenvalues and resonances of the operators from $\calT^{[s]}$, and show that such an operator is unique. Equivalently, we can characterize all the poles of the functions $m$ from $\calM_\fre^{s\ge 0}$ and $F$ from $\calM_\frf^{s \ge 0}$ (see Theorem~\ref{thmM}). 

\begin{theorem}\label{thmR}
\hspace*{\fill}\\
\indent {\normalfont $\rl$}
Let $\{R_j\}_{j=1}^{K}$ and $\{E_j\}_{j=1}^N$ $(0\le N,K< \infty)$ be two sequences of complex numbers $($possibly with multiplicities$)$. These two sequences are respectively resonances and eigenvalues of a Jacobi operator from $\calT_\fre^{[s]}$ $(s\ge 0)$
   if and only if 
\begin{itemize}
\item[$(O_1)$] $\{E_j\}_{j=1}^N$ oddly interlace with $\{R_j\}_{j=1}^{K} \cap \bbR$ on $\bbR$ $($see Def.~\ref{defOI_light}$)$;\footnote{\label{reminder1}We remind that $(O_1)$ includes $\{R_j\}_{j=1}^K\cap\{E_j\}_{j=1}^N = \varnothing$ as part of the Definition~\ref{defOI_light}.}
\item[$(O_2)$] Each gap $[\beta_k,\alpha_{k+1}]$ contains an odd number of points from $\{E_j\}_{j=1}^N \cup \{R_j\}_{j=1}^{K}$ $($counting with multiplicities$)$;
\item[$(O_3)$] $E_j\in\bbR\setminus\fre$ for every $j$; each $E_j$ is of multiplicity $1$;
\item[$(O_4)$] $R_j\in\bbC\setminus\Int(\fre)$ and they are real or come in complex conjugate pairs $($counting multiplicities$)$; if $R_{j}\in\cup_{j=1}^{l+1}\{\alpha_j,\beta_j\}$, then the multiplicity of $R_{j}$ is $1$;
\item[$(O_5)$] $K+N = l+s$.
\end{itemize}
 Such a Jacobi operator $\calJ$ is unique. 

\smallskip{\normalfont $\uc{}$}
Let $\{R_j\}_{j=1}^{K}$ and $\{E_j\}_{j=1}^N$ $(0\le N,K< \infty)$ be two sequences of complex numbers $($possibly with multiplicities$)$. These two sequences are respectively resonances and eigenvalues of a CMV operator from $\calT_\frf^{[s]}$
   if and only if 
\begin{itemize}
\item[$(O_1)$] $\{E_j\}_{j=1}^N$ oddly interlace with $\{R_j\}_{j=1}^{K} \cap\partial\bbD$ on $\partial\bbD$ $($see Def.~\ref{defOI_light}$)$;\textsuperscript{\ref{reminder1}}
\item[$(O_2)$] Each gap $[\beta_k,\alpha_{k+1}]$ contains an odd number of points from $\{E_j\}_{j=1}^N \cup \{R_j\}_{j=1}^{K}$ $($counting with multiplicities$)$;
\item[$(O_3)$] $E_j\in\partial\bbD\setminus\frf$ for every $j$; each $E_j$ is of multiplicity $1$;
\item[$(O_4)$] $R_j\in\bbC\setminus\{0\} \setminus\Int(\frf)$ and they are unimodular or come in symmetric (with respect to $\partial\bbD$) pairs $($counting multiplicities$)$; if $R_{j}\in\cup_{j=1}^{2l}\{e^{i\theta_j}\}$, then the multiplicity of $R_{j}$ is $1$;
\item[$(O_5)$] $K+N = l+2s$.
\end{itemize}
 Such a CMV operator $\calC$ is unique.
\end{theorem}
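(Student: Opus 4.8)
The plan is to deduce Theorem~\ref{thmR} directly from the spectral characterization in Theorem~\ref{thmS}, reading the eigenvalue/resonance data off the zeros of the polynomial $d(z)$ appearing there. Recall that by Theorem~\ref{thmM} the $m$-function of $\calJ\in\calT_\fre^{[s]}$ has the form $m=(p+\sqrt{R_\fre})/a$, and that $d$ in Theorem~\ref{thmS} is simply $\pi a$ (for CMV, $d=-2\pi i a$). The key bookkeeping fact I would establish first is that the eigenvalues and resonances are in bijection with the zeros of $a$: for $s\ge 1$, $m$ has no poles at $\infty_\pm$ by (2c) of Lemma~\ref{lemEqualDegree}, while for $s=0$ the single pole of $m$ at $\infty_-$ is excluded from the count by Definition~\ref{defSingularity}; apart from this, every pole of $m$ comes from a zero of $a$, and each zero determines a pole on exactly one sheet. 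A real zero outside $\fre$ lands either on $\calS_{\fre,+}$ (an eigenvalue, by~\eqref{herg2}) or on $\calS_{\fre,-}$ (an anti-bound state); a complex zero, since $a$ is real and eigenvalues are real, must give a conjugate resonance pair on $\calS_{\fre,-}$; and a band-edge zero gives a simple resonance by (2b). Thus $\{E_j\}\cup\{R_j\}$ is exactly the zero set of $a$, which immediately yields $(O_5)$, namely $K+N=\Deg a=l+s$, and fixes $s=K+N-l$ (with $s\ge 0$ forced by the $l$ odd gap counts).

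For the forward direction, given $\calJ\in\calT_\fre^{[s]}$, most of $(O_1)$--$(O_4)$ are restatements of $(S_a)$--$(S_c)$. The reality of $a$ and its sign pattern $\sgn\,a=\sg_\fre$ on $\fre$ force each $R_j$ to be real or to occur in a conjugate pair and force $R_j\notin\Int(\fre)$ (otherwise $d\mu/dx$ would fail to be integrable), giving $(O_4)$; the sign change of $a$ across each gap forces an odd number of zeros there, giving $(O_2)$; simplicity of the point masses together with the exclusion of band edges as eigenvalues gives $(O_3)$; and $(O_1)$ is precisely the oddly interlacing statement $(S_b)$, proved in Theorem~\ref{thmS} through the sign analysis of $m-m^\sharp$.

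For the reverse direction and uniqueness, given data obeying $(O_1)$--$(O_5)$ I would build $d$ as the real polynomial of degree $K+N=l+s$ whose zeros are $\{E_j\}$ (simple) and $\{R_j\}$ (with multiplicities), normalized so that its sign on the bands equals $\sg_\fre$; this is possible exactly because of $(O_4)$ (reality), $(O_2)$ (an odd count per gap, so the sign alternates correctly), and $(O_5)$. One then forms the candidate measure~\eqref{meas2} with weights~\eqref{weights2}. Condition $(O_1)$ is again $(S_b)$ and guarantees, via the positivity argument of Theorem~\ref{thmS}, that each $w_j>0$, so $\mu$ is (after scaling to a probability measure) of the form $(S_\fre^{s\ge 0})$. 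Theorem~\ref{thmS} then gives $\calJ\in\calT_\fre^{[s]}$, and by Remark~3 after it the resonances of this $\calJ$ are precisely the zeros of $d$ that are not point masses, i.e.\ exactly $\{R_j\}$, while the eigenvalues are $\{E_j\}$. Uniqueness follows because the data fix $d$ up to one positive scalar (sign by $(S_a)$, magnitude by $\mu(\bbR)=1$), hence fix $\mu$, which determines $\calJ$ as the Jacobi matrix of its orthonormal polynomials. The CMV case runs in complete parallel, with $\sqrt{R_\fre}$ replaced by $z^s\sqrt{R_\frf}$, $\bbR$ by $\partial\bbD$, and $\Deg d=l+2s$, using~\eqref{continuationF} and~\eqref{weights_complexF}.

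The delicate point, and the step I expect to demand the most care, is the reverse direction's claim that the reconstructed operator has \emph{exactly} the prescribed singularities and no spurious ones: one must verify that every real zero of $d$ not listed among $\{E_j\}$ genuinely lands on $\calS_{\fre,-}$ (an anti-bound state rather than an eigenvalue), that complex zeros produce conjugate resonance pairs on the lower sheet, and that band-edge zeros stay simple. This is where $(O_1)$ does the real work: without the oddly interlacing constraint one could not consistently assign sheets while keeping all $w_j$ positive. For CMV there is the additional parity subtlety $K+N\equiv l \pmod 2$, which I expect to fall out of the symmetry~\eqref{symmF} imposed on $d$; this, coupled to the per-gap sign analysis and the global degree count $(O_5)$, is the crux of the argument.
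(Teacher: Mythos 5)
Your proposal is correct and follows essentially the same route as the paper: both directions are read off from Theorem~\ref{thmS}, with necessity obtained by translating $(S_a)$--$(S_c)$ into $(O_1)$--$(O_5)$ (including the integrability and sign/symmetry arguments for $(O_4)$ and $(O_2)$), and sufficiency by constructing $d$ from the prescribed zeros, fixing its sign on the first band and propagating via $(O_2)$, defining the weights by~\eqref{weights2}/\eqref{weights2F}, and normalizing the total mass; uniqueness likewise follows because each reconstruction step is forced. Your added care about verifying that no spurious singularities appear is exactly what Remark~3 after Theorem~\ref{thmS} supplies, so nothing further is needed.
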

\begin{remark}
In particular for $\uc{}$ the parity of $l$ and of the total number of singularities $K+N$ must coincide.
\end{remark}




\begin{proof}

The arguments for $\rl{}$ and $\uc{}$ are almost identical here. Let us show the $\uc{}$ case only.

\smallskip{\normalfont $\uc{}$}

Let us first show the necessity. Theorem~\ref{thmS} contains $(O_1)$ and $(O_3)$ in $(S_b)$. $(O_2)$ follows from the sign-alternating property of $d(e^{i\theta})$, see $(S_a)$. That $R_j$ are unimodular or come in symmetric (with respect to $\partial\bbD$) pairs follows from the sign condition in $(S_b)$: indeed, $d(z)$ and $\overline{d(\bar{z}^{-1})}$ coincide on $\partial\bbD$ (on $\partial\bbD$ away from the cut, if $l$ is odd), which implies that zeros are symmetric. The rest of $(O_4)$ is a consequence of integrability of $\tfrac{d\mu}{dx}$ on $\fre$. $(O_5)$ is clear from the degree condition of $(S_a)$.

To show sufficiency, given $\{R_j\}_{j=1}^{K}$ and $\{E_j\}_{j=1}^N$, let
$$d(z)=A \prod_{j=1}^{K} R_j^{-1/2} (z-R_j) \prod_{j=1}^{N} E_j^{-1/2} (z-E_j),$$ where $A$ is a real constant to be determined momentarily. Note that $e^{-is\theta-il\theta/2}\,d(e^{i \theta})$ is real on $\partial\bbD$ by the analogue of~\eqref{reality}. Now choose the sign of $A$ so that~\eqref{messySign} holds on the first band $(\theta_1,\theta_2)$. Using $(O_2)$, we can see that~\eqref{messySign} holds on each of the subsequent bands of $\frf$ too. Define $w_j>0$ by \eqref{weights2F} for each $1\le j\le N$. 
Finally, the absolute value of $A$ can be chosen so that the total mass of $\mu$ is $1$.

Uniqueness follows from the fact that each step of the measure reconstruction was uniquely determined by the spectral characterization of Theorem~\ref{thmS}.
\end{proof}

\section{$m$-functions as solutions to an interpolation problem}\label{sInterpolation}

In the previous section we showed how one can recover the spectral measure from the resonances and eigenvalues. The $m$-function is then, of course, just~\eqref{m}. Let us conclude this paper by showing explicitly and constructively how one can recover $m$ from  $\{R_j\}_{j=1}^K$ and $\{E_j\}_{j=1}^N$ without doing the integration in~\eqref{m}. The arguments for Carath\'{e}odory functions can be done in the analogous way and will be skipped.

For simplicity let us assume that $R_j\ne R_k$ for $j\ne k$, i.e., each resonance has multiplicity 1. We will discuss the changes necessary for the general case in the end of the section.

%

From the discussion above, we know that
$$
m(z) = \frac{p(z) + \sqrt{R_\fre(z)}}{a(z)},
$$
where $a(z) = A \prod_{j=1}^{K} (z-R_j) \prod_{j=1}^{N} (z-E_j)$, where the sign of $A\in\bbR$ is chosen so that $a(z)$ is positive on $(\alpha_{l+1},\beta_{l+1})$ and the absolute value will be chosen later to normalize $\lim_{z\to\infty_+} zm(z) = -1$. The polynomial $p(z)$ can be recovered from the condition $(2)$ of Lemma~\ref{lemEqualDegree}. Indeed, we claim it must satisfy
\begin{equation}\label{system}
\begin{cases}
\left(p(z)-\sqrt{R_\fre(z)}\right)\big|_{z=(E_j)_+} = 0, \quad j=1,\ldots,N; \\
 \left(p(z) +  \sqrt{R_\fre(z)} \right)\big|_{z=(R_j)_+} =0, \quad j=1,\ldots,M; \\
\frac{p(z)+\sqrt{R_\fre(z)}}{z^{N+M}}\to 0, \quad z\to\infty_+.
\end{cases}
\end{equation}
Indeed, the first two equations come from $(2a)$ and $(2b)$ of Lemma~\ref{lemEqualDegree} (note that if $R_j$ is an endpoint of $\fre$, then $p(R_j)=0$ by the arguments in the proof of Lemma~\ref{lemEqualDegree}), and the last condition of~\eqref{system} is a consequence of $m(z) \to 0$ as $z\to\infty_+$. Now let us show that this system determines $p(z)$ uniquely.

Note that the first two lines of~\eqref{system} constitute $N+M$ linear equations  with respect to the unknown coefficients of the polynomial
$$p(z)=\sum_{k=0}^L c_k z^k.$$

Consider the following cases.

If $K+N = l = \Deg a$ then (recall $\Deg R_\fre = 2(l+1)$) the last condition of~\eqref{system} requires $L=l+1$, and determines the coefficients $c_{l+1}$ and $c_{l}$. Therefore we are left with $l$ unknown coefficients $c_{l-1},\ldots,c_0$. Note that this coincides with the number of the linear equations in~\eqref{system}.

If $K+N=l+1 = \Deg a$ then the last condition of~\eqref{system} requires $L=l+1$ and determines only the coefficient $c_{l+1}=-1$. Therefore we are left with $l+1$ unknown coefficients $c_{l},\ldots,c_0$. Note that again, this coincides with the number of the linear equations in~\eqref{system}.

Finally, if $K+N = \Deg a \ge l+2$ then the last condition of~\eqref{system} only requires $L\le K+N-1$, and gives no other restrictions. Therefore we are left with $K+N$ unknown coefficients $c_{K+N-1},\ldots,c_0$. Note that again, this coincides with the number of the linear equations in~\eqref{system}.

Therefore in all cases the number of unknowns and the number of equations coincide. Moreover, the matrix of the coefficients is just the Vandermonde matrix with a nonzero determinant since all of $\{R_j\}, \{E_j\}$ are assumed to be pairwise different. Thus the solution is indeed unique. 

In fact, one can think about the system~\eqref{system} as a Lagrange interpolation problem or a Mittag--Leffler problem. Both have explicit solutions producing
\begin{equation}\label{m(z)}
m(z) = \frac{\sqrt{R_\fre(z)}-q(z)}{a(z)} + \sum_{j=1}^N \frac{c_j}{z-E_j} - \sum_{j=1}^M \frac{d_{j}}{z-R_j}  ,
\end{equation}
where
\begin{equation*}\label{c_j}
c_j = \frac{\sqrt{R_\fre((E_j)_+)}+q(E_j)}{a'(E_j)}, \quad  d_j = \frac{\sqrt{R_\fre((R_j)_+)}-q(R_j)}{a'(R_j)},
\end{equation*}
and
$$
q(z) =
\left\{
\begin{array}{ll}
0 & \mbox{if } K+N \ge l+2, \\
z^{l+1} & \mbox{if } K+N = l+1,\\
z^{l+1} - \tfrac12 z^l (\sum_{j=1}^{l+1} \alpha_j+\beta_j) & \mbox{if } K+N=l.\\
\end{array}
\right.
$$
Indeed, $c_j$'s are designed to make sure that there are no poles at $(E_j)_-$, and $d_{j}$'s that there are no poles at $(R_j)_+$. The $q(z)$ term is there to make sure that $m\to 0$ at $\infty_+$.


For the case when not all of $R_j$'s are simple resonances, the second line of~\eqref{system} needs to be changed to
$$
\tfrac{d^m}{dz^m} \left(p(z) +  \sqrt{R_\fre(z)} \right)\big|_{z=(R_j)_+} =0, \quad m=0,1,\ldots,n_j-1; j=1,\ldots,M,
$$
where $n_j$ is the multiplicity of $R_j$. In that case we still have the linear system of equations with the number of unknowns equal to the number of (non-identical) equations. The determinant of the matrix of coefficients can be shown to be equal to
$$
\pm \prod_{j<k} (R_j-R_k)^{n_j n_k} \prod_{j,k} (R_j-E_k)^{n_j} \prod_{j<k} (E_j-E_k),
$$
where in the products we do not repeat coinciding $R_j$'s. Indeed, this is just the so-called Hermite, rather than Lagrange, polynomial interpolation problem, which also has a unique solution. Finally, in~\eqref{m(z)} the last sum needs to be modified to $\sum_{j=1}^M \sum_{s=1}^{n_j} \frac{d_{j,s}}{(z-R_j)^s}$, where $d_{j,s}$ are the coefficients from the Laurent expansions
\begin{equation*}\label{d_j}
\frac{\sqrt{R_\fre(z)}-q(z)}{a(z)} = \sum_{s=1}^{n_j} \frac{d_{j,s}}{(z-R_j)^s} + O(1), \quad z\to (R_j)_+.
\end{equation*}


\bibliographystyle{plain}
\bibliography{../../mybib}

\def\cprime{$'$} \def\cprime{$'$}
\begin{thebibliography}{10}

\bibitem{Apt84}
A.~I. Aptekarev.
\newblock Asymptotic properties of polynomials orthogonal on a system of
  contours, and periodic motions of {T}oda chains.
\newblock {\em Mat. Sb. (N.S.)}, 125(167)(2):231--258, 1984.

\bibitem{BNW_Jac}
B.~M. Brown, S.~Naboko, and R.~Weikard.
\newblock The inverse resonance problem for {J}acobi operators.
\newblock {\em Bull. London Math. Soc.}, 37(5):727--737, 2005.

\bibitem{CSZ1}
J.~S. Christiansen, B.~Simon, and M.~Zinchenko.
\newblock Finite gap {J}acobi matrices, {I}. {T}he isospectral torus.
\newblock {\em Constr. Approx.}, 32(1):1--65, 2010.

\bibitem{CSZ2}
J.~S. Christiansen, B.~Simon, and M.~Zinchenko.
\newblock Finite gap {J}acobi matrices, {II}. {T}he {S}zeg{\H o} class.
\newblock {\em Constr. Approx.}, 33(3):365--403, 2011.

\bibitem{CSZ3}
J.~S. Christiansen, B.~Simon, and M.~Zinchenko.
\newblock Finite gap {J}acobi matrices, {III}. {B}eyond the {S}zeg{\H o} class.
\newblock {\em Constr. Approx.}, 35(2):259--272, 2012.

\bibitem{CSZr}
J.~S. Christiansen, B.~Simon, and M.~Zinchenko.
\newblock Finite gap {J}acobi matrices: a review.
\newblock In {\em Spectral analysis, differential equations and mathematical
  physics: a festschrift in honor of {F}ritz {G}esztesy's 60th birthday},
  volume~87 of {\em Proc. Sympos. Pure Math.}, pages 87--103. Amer. Math. Soc.,
  Providence, RI, 2013.

\bibitem{DKS}
D.~Damanik, R.~Killip, and B.~Simon.
\newblock Perturbations of orthogonal polynomials with periodic recursion
  coefficients.
\newblock {\em Ann. of Math. (2)}, 171(3):1931--2010, 2010.

\bibitem{DS2}
D.~Damanik and B.~Simon.
\newblock Jost functions and {J}ost solutions for {J}acobi matrices. {II}.
  {D}ecay and analyticity.
\newblock {\em Int. Math. Res. Not.}, Art. ID 19396, 32 pages, 2006.

\bibitem{Geronimo}
J.~S. Geronimo.
\newblock Scattering theory, orthogonal polynomials, and {$q$}-series.
\newblock {\em SIAM J. Math. Anal.}, 25(2):392--419, 1994.

\bibitem{GC}
J.~S. Geronimo and K.~M. Case.
\newblock Scattering theory and polynomials orthogonal on the real line.
\newblock {\em Trans. Amer. Math. Soc.}, 258(2):467--494, 1980.

\bibitem{GvA86}
J.~S. Geronimo and W.~Van~Assche.
\newblock Orthogonal polynomials with asymptotically periodic recurrence
  coefficients.
\newblock {\em J. Approx. Theory}, 46(3):251--283, 1986.

\bibitem{Geronimus44}
Ya.~L. Geronimus.
\newblock On polynomials orthogonal on the circle, on trigonometric
  moment-problem and on allied {C}arath\'eodory and {S}chur functions.
\newblock {\em Rec. Math. [Mat. Sbornik] N. S.}, 15(57):99--130, 1944.

\bibitem{bGeronimus}
Ya.~L. Geronimus.
\newblock {\em Orthogonal polynomials: {E}stimates, asymptotic formulas, and
  series of polynomials orthogonal on the unit circle and on an interval}.
\newblock Authorized translation from the Russian. Consultants Bureau, New
  York, 1961.

\bibitem{IK3}
A.~Iantchenko and E.~Korotyaev.
\newblock Periodic {J}acobi operator with finitely supported perturbation on
  the half-lattice.
\newblock {\em Inverse Problems}, 27(11):115003, 26, 2011.

\bibitem{K_spectral}
R.~Kozhan.
\newblock Spectral and resonance problem for perturbations of periodic {J}acobi
  operators.
\newblock (under submission, arXiv:1211.4274).

\bibitem{K_jost}
R.~Kozhan.
\newblock Jost asymptotics for matrix orthogonal polynomials on the real line.
\newblock {\em Constr. Approx.}, 36(2):267--309, 2012.

\bibitem{K_merom}
R.~Kozhan.
\newblock Meromorphic continuations of finite gap {H}erglotz functions and
  periodic {J}acobi matrices.
\newblock {\em Comm. Math. Phys.}, 327(3):921--950, 2014.

\bibitem{KruSim14}
H.~Kr\"{u}ger and B.~Simon.
\newblock Cantor polynomials and some related classes of {O}{P}{R}{L}.
\newblock {\em J. Approx. Theory}, 2014.

\bibitem{MNSW}
M.~Marletta, S.~Naboko, R.~Shterenberg, and R.~Weikard.
\newblock On the inverse resonance problem for {J}acobi operators---uniqueness
  and stability.
\newblock {\em J. Anal. Math.}, 117:221--247, 2012.

\bibitem{MW}
M.~Marletta and R.~Weikard.
\newblock Stability for the inverse resonance problem for a {J}acobi operator
  with complex potential.
\newblock {\em Inverse Problems}, 23(4):1677--1688, 2007.

\bibitem{PehSte86b}
F.~Peherstorfer and R.~Steinbauer.
\newblock Orthogonal polynomials on arcs of the unit circle. {II}. {O}rthogonal
  polynomials with periodic reflection coefficients.
\newblock {\em J. Approx. Theory}, 87(1):60--102, 1996.

\bibitem{PehYud03}
F.~Peherstorfer and P.~Yuditskii.
\newblock Asymptotic behavior of polynomials orthonormal on a homogeneous set.
\newblock {\em J. Anal. Math.}, 89:113--154, 2003.

\bibitem{SWZ}
R.~Shterenberg, R.~Weikard, and M.~Zinchenko.
\newblock Stability for the inverse resonance problem for the {CMV} operator.
\newblock In {\em Spectral analysis, differential equations and mathematical
  physics: a festschrift in honor of {F}ritz {G}esztesy's 60th birthday},
  volume~87 of {\em Proc. Sympos. Pure Math.}, pages 315--326. Amer. Math.
  Soc., Providence, RI, 2013.

\bibitem{OPUC1}
B.~Simon.
\newblock {\em Orthogonal polynomials on the unit circle. {P}art 1}, volume~54
  of {\em American Mathematical Society Colloquium Publications}.

\bibitem{OPUC2}
B.~Simon.
\newblock {\em Orthogonal polynomials on the unit circle. {P}art 2}, volume~54
  of {\em American Mathematical Society Colloquium Publications}.
\newblock American Mathematical Society, Providence, RI, 2005.

\bibitem{Rice}
B.~Simon.
\newblock {\em Szeg{\H o}'s theorem and its descendants: spectral theory for
  $L{^{2}}$ perturbations of orthogonal polynomials}.
\newblock M. B. Porter Lectures. Princeton University Press, Princeton, NJ,
  2011.

\bibitem{SodYud97}
M.~Sodin and P.~Yuditskii.
\newblock Almost periodic {J}acobi matrices with homogeneous spectrum,
  infinite-dimensional {J}acobi inversion, and {H}ardy spaces of
  character-automorphic functions.
\newblock {\em J. Geom. Anal.}, 7(3):387--435, 1997.

\bibitem{Ver36}
S.~Verblunsky.
\newblock On {P}ositive {H}armonic {F}unctions.
\newblock {\em Proc. London Math. Soc.}, S2-40, 1936.

\bibitem{WZ}
R.~Weikard and M.~Zinchenko.
\newblock The inverse resonance problem for {CMV} operators.
\newblock {\em Inverse Problems}, 26(5):055012, 10, 2010.

\bibitem{Wid69}
H.~Widom.
\newblock Extremal polynomials associated with a system of curves in the
  complex plane.
\newblock {\em Advances in Math.}, 3:127--232 (1969), 1969.

\end{thebibliography}

\end{document}